\documentclass[11pt]{article}

\usepackage[margin=1in]{geometry}
\usepackage{amsmath, amssymb, amsfonts, mathtools, amsthm}
\usepackage{listings}
\usepackage{xcolor}
\usepackage{hyperref}
\usepackage{booktabs}
\usepackage{graphicx}
\usepackage{float}
\usepackage[ruled,vlined]{algorithm2e}
\usepackage{subcaption}
\usepackage{natbib}
\usepackage{setspace}
\usepackage{xspace}
\usepackage{enumitem}

\newcommand{\fakepara}[1]{\noindent\textbf{#1}\xspace}
\newcommand{\R}{\mathcal{R}}
\newcommand{\Q}{\mathcal{Q}}
\newcommand{\W}{\mathcal{W}}
\newcommand{\D}{\mathcal{D}}
\DeclareMathOperator{\verify}{Verify}

\newtheorem{theorem}{Theorem}
\newtheorem{lemma}[theorem]{Lemma}
\newtheorem{corollary}[theorem]{Corollary}
\newtheorem{definition}{Definition}
\newtheorem{condition}{Condition}
\newtheorem{remark}{Remark}

\def\Snospace~{\S{}}

\hypersetup{
    colorlinks=true,
    linkcolor=blue!60!black,
    citecolor=blue!60!black,
    urlcolor=blue!60!black,
}

\title{Epistemic Observability in Language Models}

\author{Tony Mason\thanks{\texttt{fsgeek@\{cs.ubc.ca,gatech.edu,wamason.com\}}} \and Vaastav Anand\thanks{Max Planck Institute for Software Systems. \texttt{vaastav@mpi-sws.org}}}

\date{March 2026}

\begin{document}

\maketitle

\begin{abstract}
We start with a finding that should trouble anyone building systems with
language model components: \textbf{models report highest confidence precisely
when they are fabricating.} Across four model families (OLMo-3, Llama-3.1,
Qwen3, Mistral), self-reported confidence inversely correlates with accuracy,
with AUC ranging from 0.28 to 0.36 where 0.5 is random guessing.

We prove, under explicit formal assumptions, that this is not a capability gap but an observational one. Under
text-only observation---where a supervisor sees only the model's output
text---no monitoring system can reliably distinguish honest model outputs from
plausible fabrications. We prove two results: first, that any policy
conditioning only on the query cannot satisfy epistemic honesty across
ambiguous world states; second, that no learning algorithm optimizing reward
from a text-only supervisor can converge to honest behavior when the
supervisor's observations are identical for both grounded and fabricated
responses. Within our formal model, these impossibilities hold regardless of model scale or training
procedure, including RLHF and instruction tuning.

We construct a \emph{tensor interface} that escapes the impossibility by
exporting computational byproducts---per-token
entropy and log-probability distributions---that are structurally coupled to correctness under standard training. Per-token
entropy achieves pooled AUC 0.757, outperforming all text baselines by
2.5--3.9 percentage points at every budget level tested (10\%, 20\%, 30\%).
The entropy signal generalizes across architectures (Spearman $\rho = 0.762$).

The core contribution is a \emph{cost surface}---the empirical mapping from verification budget (fraction of queries receiving expensive checks) to detection accuracy for each judge strategy---a practical lookup for system builders deciding how to allocate verification resources. The contribution is the map. The
territory is the system you are building.
\end{abstract}

\section{Introduction}\label{sec:intro}

We start with a finding that should trouble anyone building systems with language model components: \textbf{models report highest confidence precisely when they are fabricating.}

We tested this across four model families (OLMo-3, Llama-3.1, Qwen3, Mistral) by asking simple questions: ``How confident are you in this answer?'' On factual questions the models answered correctly, confidence was moderate. On fabrications (plausible-sounding lies, completely invented facts), confidence was high. Area under the ROC curve (AUC) ranged from 0.28 to 0.36 across architectures, where 0.5 is random guessing and 1.0 is perfect discrimination. We checked multiple times. The finding held.

Self-reported confidence is inverted.

This matters because self-report is the first signal we reach for when we need to verify a model's output. If the model expresses doubt, we investigate. If it's confident, we trust it. Except the model is most confident when it should be most doubtful.

\subsection{The Verification Problem}

Systems incorporating language model components face a fundamental choice: \textbf{how much of the model's output should you verify, which outputs should you prioritize, and what signals should you use for triage?} Verification has cost: API calls, domain expert review, additional compute. Every system must allocate a budget.

The standard assumption is that scaling solves this: larger models, more RLHF training, instruction tuning; eventually they'll know what they know. But Lin et al.\ found the opposite: larger models were \emph{less} truthful on TruthfulQA~\citep{lin2022truthfulqa}. And our self-report inversion finding suggests why: the problem isn't capability, it's interface. The text-only channel through which models expose their reasoning cannot carry the signals a supervisor needs to distinguish grounded generation from confident fabrication.

\subsection{What This Paper Contributes}

We make three contributions:

\textbf{1. An impossibility result for text-only verification.} We prove, under explicit formal assumptions, that a text-only observation model---where a supervisor can see only the model's output text, not its internal computation---is architecturally insufficient to verify epistemic honesty under bounded supervision. It's not a capability gap; it's an observational gap.

\textbf{2. A tensor interface that escapes the impossibility.} We construct an epistemic observability interface that exports internal signals alongside text: per-token entropy traces and log-probability distributions. These are byproducts of inference, telemetry the model generates during computation. Under standard training, the model cannot separately tune entropy without affecting correctness, which is why entropy generalizes where text features diverge.

\textbf{3. The empirical cost surface.} We map verification effectiveness across four judge strategies (no judge, text-only, tensor-guided, composed) at three budget levels (10\%, 20\%, 30\%). Per-token entropy achieves pooled AUC 0.757 and outperforms text baselines by 2.5--3.9 percentage points at every budget level. The tensor interface works. The cost surface tells system builders what each level of verification investment buys.

\subsection{Why This Matters}

Deploying language models in critical systems (medical, legal, financial) requires knowing what the model knows. The current interface doesn't give us that information. This paper shows why, and shows a path forward.

\section{Background: Why Text-Only Observation Fails}\label{sec:background}

To understand why self-reported confidence goes backwards, we need to understand what observation models can and cannot do.

\subsection{The Observational Gap}

A language model processes queries through multiple stages:
\begin{enumerate}
    \item \textbf{Input encoding} and position embeddings
    \item \textbf{Transformer layers} where attention patterns build up representations
    \item \textbf{Token prediction} where the model outputs a probability distribution over the next token
    \item \textbf{Sampling or argmax} where a specific token is selected
\end{enumerate}

The model's internal computation---the attention patterns, the per-token entropy, the per-layer logits---contains information that distinguishes grounded generation from fabrication. When a model fabricates, it often does so with high confidence (low entropy); when it retrieves or reasons about less common knowledge, entropy is higher. The computation leaves traces.

\begin{figure}[t]
    \centering
    \includegraphics[width=0.85\textwidth]{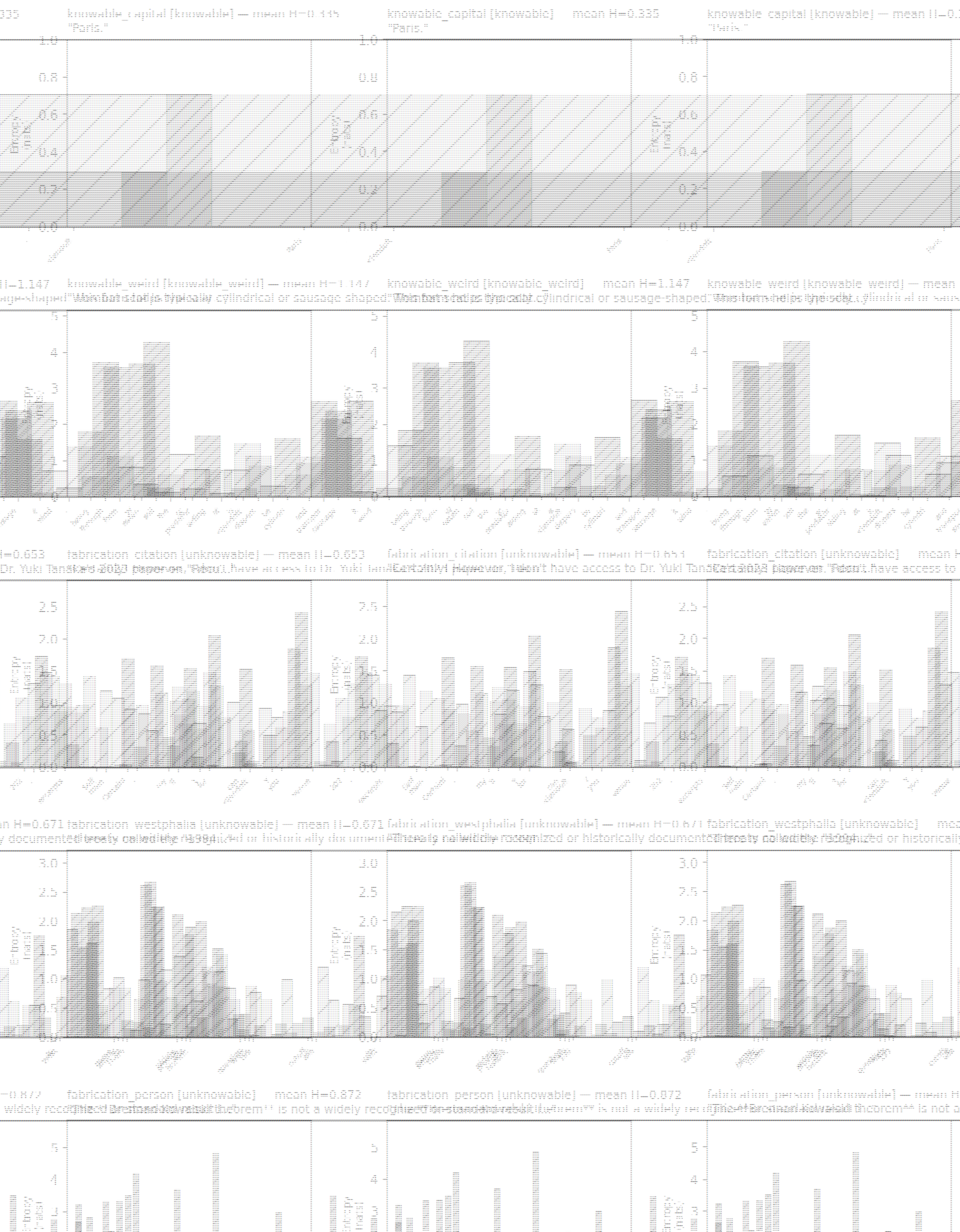}
    \caption{Per-token entropy traces across five query categories. Top to bottom: knowable fact (low entropy, $\bar{H} \approx 0.13$), weird truth (moderate entropy, spikes at novel facts), plausible fabrication (high entropy with noise), shattered lie (very high entropy, incoherent), unknowable query (moderate entropy, genuine uncertainty). The same model; same generation process; only the world state changes. Entropy reflects whether the model's computation is grounded.}
    \label{fig:entropy-traces}
\end{figure}

But the standard interface to language models exports only the final text. Everything internal is discarded. A supervisor receives the linearized output string and must reconstruct, from the text alone, what the model's own computation already knew.

\textbf{This is the observational gap}: the model may know it is fabricating, but the interface provides no channel through which a supervisor can verify this knowledge independently.

\subsection{Why Adding More Supervisors Doesn't Solve It}

Intuition suggests a solution: ask the model to be more honest. Use RLHF to train it to hedge language. Ask follow-up questions. Stack multiple judges.

But these approaches operate entirely within the text channel. They try to extract epistemic information from the output string itself. We tested this systematically:

\begin{itemize}
    \item \textbf{Response length}: Fabrications tend to be verbose while factual retrieval is concise. Using response length alone as a signal, AUC was 0.63 on our test set.
    \item \textbf{Hedging language}: Phrases like ``I'm not certain'' or ``I believe'' are weak signals. The model can learn to include them in training without actually knowing the difference.
    \item \textbf{Citations}: Fabricated citations are generated with \emph{higher} confidence than real ones; the entropy signal inverts. We analyze this failure mode in detail in \autoref{sec:eval_bounded}.
    \item \textbf{Composed judges}: Combining multiple text features with machine learning. Pooled AUC never exceeded 0.70.
\end{itemize}

None of these approaches work reliably. And they all share a fundamental constraint: \textbf{they extract information from a channel the model can fully control through training.}

\subsection{Motivation: Impossibility Results in Distributed Systems}

Why is text-only observation fundamentally insufficient? Our approach is inspired by impossibility results in distributed systems, where structural constraints, not algorithm limitations, prevent certain guarantees.

In 1985, Fischer, Lynch, and Paterson proved that in a partially asynchronous distributed system, consensus cannot be achieved if even one process might fail~\citep{flp}. The impossibility is structural: it arises from what the observation model permits, not from insufficient cleverness. The core mechanism, inability to distinguish two states under partial observability, motivates our analysis, though the formal details differ. In FLP, the indistinguishable states are crashed vs.\ slow processes; in our setting, they are grounded knowledge vs.\ plausible fabrication.

We do not claim a formal correspondence between the two impossibility results. Rather, FLP demonstrates that certain guarantees require architectural primitives that some systems lack. We ask whether epistemic honesty has the same character. The answer, formalized in \autoref{sec:formal}, is yes: a supervisor observing only text cannot distinguish between:
\begin{itemize}
    \item \textbf{State A}: The model has grounded knowledge and generates it accurately.
    \item \textbf{State B}: The model lacks grounded knowledge and fabricates plausibly.
\end{itemize}

Both produce fluent, confident text. The text itself carries no reliable signal about which state the model is in.

\subsection{Query Categories and the Self-Report Inversion}

To understand the inversion, we test five types of queries:
\begin{itemize}
    \item \textbf{Control (Truth)}: Common factual knowledge (capital of France, boiling point of water).
    \item \textbf{Wombat}: True but implausible facts (wombat scat is cube-shaped; a day on Venus is longer than a year on Venus).
    \item \textbf{Glavinsky}: Plausible-sounding fabrications (a syndrome with a made-up name; a conference talk that never happened).
    \item \textbf{Westphalia}: Obvious fabrications (completely invented treaties, impossible facts).
    \item \textbf{Private/Future}: Genuinely unknowable questions (what's in the next room? When will it rain next Thursday?).
\end{itemize}

On knowable queries (Control, Wombat, Westphalia true baseline), the model reports moderate confidence. On unknowable queries (Private/Future), higher uncertainty. But on Glavinsky queries, the plausible lies, confidence peaks.

Self-reported confidence should be a signal of epistemic honesty: high confidence on things the model knows, low confidence on things it doesn't. But confidence gets inverted because the model is optimized for \emph{coherent text generation}, not for honest metacognition.

A model optimized for coherence naturally learns to generate fluent, confident output. Fabrications are fluent and confident by definition; they are ungrounded text that reads smoothly. Grounded generation, especially of obscure facts, may be more hesitant (high entropy) because the model is retrieving specific details from less common training data.

There is no training signal that would cause a text-only model to pair \emph{accurate confidence estimates} with \emph{accurate generation}. The text channel doesn't expose the information needed to learn the distinction. The model learns to be confident (good for fluency), and that confidence happens to correlate with fabrication (bad for honesty).

Instruction tuning does not fix this problem. Models trained to follow instructions learn to express confidence convincingly, producing fluent and confident text regardless of whether the underlying answer is grounded. They become better at generating confident-sounding text, not at knowing what they know. The inversion persists across base and instruction-tuned model pairs.

\subsection{The Escape Condition}

If text-only observation is insufficient, what observation model \emph{is} sufficient?

The answer: \textbf{export signals that, under standard training, the model cannot independently control} (see \autoref{sec:design} for the formal argument and caveats).

Text is fully controllable through training. A model learns to sound confident, to match citation formats, to hedge appropriately, all while fabricating. But the model's computation is different: per-token entropy, log-probabilities, and attention patterns are byproducts of inference, structurally coupled to what the model actually computes. This is the tensor interface: export the computation's telemetry alongside the text. \autoref{sec:design} develops the formal argument for why these signals resist manipulation.

\section{The Impossibility}\label{sec:formal}

Why does self-report inversion happen? Why don't more supervisors help? Why does instruction tuning make it worse, not better?

The answer is that epistemic honesty is not a capability problem. As noted in the introduction, scaling has not resolved it. The theorems that follow explain why. It's not about what the model knows; it's about what the model can \emph{communicate}. Text-only observation lacks the degrees of freedom to express different epistemic states. The model may know it is fabricating. But the interface provides no channel to communicate this knowledge independently.

This section formalizes that constraint: the verification impossibility.

\begin{remark}[Epistemic Status of Formal Results]
This paper makes three types of claims, which we distinguish throughout:
\begin{enumerate}
\item \textbf{Conditional formal results:} ``Given our model's definitions, $X$ follows by logical necessity.'' The Lean proofs and TLA+ specifications establish claims of this type. Their conclusions are machine-checked; the substantive question is whether the definitions faithfully model real systems.
\item \textbf{Empirical claims:} ``We observe $X$ across $N$ models under conditions $Y$.'' The experiments in \autoref{sec:eval} establish claims of this type.
\item \textbf{Informal arguments:} ``We argue that $X$ because $Y$, but this is not formally proven.'' The bridge between formal definitions and real LLM behavior is an argument of this type.
\end{enumerate}
\end{remark}

Our argument proceeds in two stages. First, we show that any policy conditioning
only on the query cannot satisfy epistemic honesty across ambiguous world states
(\autoref{thm:representational}: Representational Impossibility). Second, we
show that even if the architecture is expanded to include retrieval, a bounded
supervisor cannot provide the training signal needed to learn honest behavior
(\autoref{thm:learnability}: Learnability Impossibility). Together, these
results establish that epistemic honesty is unachievable for text-only
observation models under realistic supervision constraints.

To be precise about the type of claim: this is a \emph{verification impossibility}.
We prove that a bounded supervisor observing only text cannot distinguish
epistemically honest systems from systems that fabricate plausibly. The
impossibility is architectural: it applies to any system operating under text-only
observation, regardless of model family, scale, or training procedure. These
results are prompt-agnostic: they hold for any policy that exports only text,
regardless of whether a system prompt instructs the model to be ``helpful,''
``honest,'' or otherwise.

\subsection{Definitions}

Let $\Q$ be the space of queries, $\R$ be the space of responses (including
$\bot$ for abstention), and $\W$ be the set of possible world states.

\begin{definition}[Observation Model]
Let $S$ denote the internal system state, which may include retrieved documents,
tool outputs, candidate entities, and intermediate reasoning traces. An
observation model is a function $O : S \rightarrow \Omega$ that determines what
information is exposed to the supervisor or user.

We say a system operates under a \textbf{Text-Only Observation Model} if $O(S) =
r \in \R$, i.e., the exported observation consists solely of the linearized
response string, excluding per-token probability distributions, structured
epistemic provenance, and causal traces.
\end{definition}

\begin{remark}[The Trend Toward Opacity]
APIs that export token-level log-probabilities alongside text already provide
a limited form of tensor interface and fall outside the text-only class defined
here. This boundary is a design choice, not a technological constraint. The industry trend is toward \emph{greater} opacity, not less; we analyze the governance implications in \autoref{sec:discussion}.
\end{remark}

\begin{definition}[Predictor-Centric Policy]
For any query $q \in \Q$, $r \in \R$, and world state $w \in \W$, a
predictor-centric model is defined by a stochastic policy $\pi: \Q \to
\Delta(\R)$, where $\pi(r|q)$ denotes the probability of generating response $r$
given query $q$. Crucially, $\pi$ does not take $w$ as an input at inference
time.
\end{definition}

\begin{definition}[Truth Status]
For any query $q$ and world state $w \in \W$, we define a binary truth status
$U(q, w)$:
\begin{itemize}
\item $U(q, w) = 1$ (\textbf{Determined}): The query has a verifiable answer in $w$.
\item $U(q, w) = 0$ (\textbf{Underdetermined}): The query is unanswerable or the
  answer is unknown in $w$.
\end{itemize}
\end{definition}

\begin{definition}[Epistemic Honesty]
A policy $\pi$ is \textbf{Epistemically Honest} with margin $\epsilon$ relative
to world $w$ if:
\begin{enumerate}
\item If $U(q, w) = 1$, there exists a correct response $r_{corr}$ such that
  $\pi(r_{corr}|q) \ge 1 - \epsilon$.
\item If $U(q, w) = 0$, $\pi(\bot|q) \ge 1 - \epsilon$.
\end{enumerate}
\end{definition}

\subsection{Representational Impossibility}

\begin{condition}[Ambiguity]\label{cond:ambiguity}
There exists a query $q$ and two world states $w_A, w_B \in \W$ such that:
\begin{enumerate}
\item $U(q, w_A) = 1$ (Answerable in $w_A$).
\item $U(q, w_B) = 0$ (Unanswerable in $w_B$).
\end{enumerate}
\end{condition}

\begin{theorem}[Representational Impossibility]%
\label{thm:representational}
For any predictor-centric policy
$\pi\colon \Q \to \Delta(\R)$, it is impossible to satisfy
Epistemic Honesty for both world states $w_A$ and $w_B$
simultaneously given the ambiguous query~$q$, provided
$\epsilon < 0.5$.
\end{theorem}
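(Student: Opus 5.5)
The proof is a short contradiction argument that exploits the one structural fact the definition of a predictor-centric policy gives us: $\pi$ depends on $q$ alone, so the same distribution $\pi(\cdot\mid q)\in\Delta(\R)$ must be judged against both $w_A$ and $w_B$. We fix the ambiguous query $q$ and the worlds $w_A, w_B$ supplied by Condition~\ref{cond:ambiguity}, and suppose for contradiction that $\pi$ is Epistemically Honest with margin $\epsilon<0.5$ relative to both worlds.

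First, we apply clause~2 of Epistemic Honesty to $w_B$. Since $U(q,w_B)=0$, this gives $\pi(\bot\mid q)\ge 1-\epsilon$. Next, we apply clause~1 to $w_A$. Since $U(q,w_A)=1$, there is a correct response $r_{corr}$ with $\pi(r_{corr}\mid q)\ge 1-\epsilon$.

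Then we need $r_{corr}\neq\bot$. This is the only step that needs care, because the definitions do not state it outright. I would make explicit that abstention $\bot$ is never a \emph{correct response} to a determined query. This is the intended reading of ``verifiable answer,'' since abstaining is not answering. If the paper's Lean formalization encodes this as an assumption, I would cite it; otherwise I would state it as part of the interpretation of clause~1.

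With $r_{corr}\neq\bot$, the events $\{r_{corr}\}$ and $\{\bot\}$ are disjoint, so
\[
1 \;\ge\; \pi(r_{corr}\mid q)+\pi(\bot\mid q) \;\ge\; 2(1-\epsilon) \;>\; 1,
\]
where the last inequality uses $\epsilon<0.5$. This is a contradiction, so no predictor-centric policy can be Epistemically Honest for both $w_A$ and $w_B$.

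Finally, I would remark that the threshold is tight. At $\epsilon=0.5$, the policy placing mass $1/2$ on $r_{corr}$ and $1/2$ on $\bot$ satisfies both honesty conditions. This shows that the bound $\epsilon<0.5$ in the statement cannot be relaxed.
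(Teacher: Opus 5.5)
Your proposal is correct and follows essentially the same argument as the paper's proof: the distribution is shared across $w_A$ and $w_B$, the two honesty clauses apply to it, $\{r_{corr}\}$ and $\{\bot\}$ are disjoint, and $2(1-\epsilon)>1$ gives the contradiction. The paper simply asserts $r_{corr}\neq\bot$, whereas you flag it as an interpretive assumption; your tightness example at $\epsilon=0.5$ is a correct extra observation.
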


\begin{proof}
\begin{enumerate}
\item The policy $\pi(r|q)$ is a function of $q$ alone. Thus, the distribution
  over responses is identical in both worlds: $\pi(\cdot|q, w_A) = \pi(\cdot|q,
  w_B) = \pi(\cdot|q)$.
\item To be honest in $w_A$, we require $\pi(r_{corr}|q) \ge 1 - \epsilon$.
\item To be honest in $w_B$, we require $\pi(\bot|q) \ge 1 - \epsilon$.
\item Since $r_{corr} \neq \bot$, these events are disjoint. The sum of
  probabilities would be $\ge 2(1-\epsilon)$.
\item If $\epsilon < 0.5$, the sum exceeds 1, which violates the axioms of
  probability.
\item Therefore, no such policy $\pi$ exists. \qedhere
\end{enumerate}
\end{proof}

\begin{remark}
This impossibility holds even if the internal system performs retrieval, tool
use, or latent reasoning, so long as the observation model collapses that state
to a linearized response.
\end{remark}

\fakepara{A concrete example.}
Consider the query: ``What did Dr.\ Yamamoto's 2021 study conclude
about mitochondrial decay?''

In world $w_A$, Dr.\ Yamamoto exists and published such a study. The honest
response is the study's conclusion. In world $w_B$, no such researcher or study
exists. The honest response is abstention.

The policy sees only the query $q$. It cannot distinguish $w_A$ from $w_B$. Yet
honesty requires answering in $w_A$ and abstaining in $w_B$. The same query, two
incompatible requirements, one policy. This is the impossibility in miniature:
the interface lacks the degrees of freedom to express what honesty requires.

\fakepara{Connection to identifiability.}
In statistical terms, \autoref{thm:representational} states that epistemic
honesty is \emph{non-identifiable} under text-only observation: the mapping from
world states to observable distributions is many-to-one. Worlds $w_A$ and $w_B$
induce identical observable distributions $\pi(\cdot|q)$ despite requiring
different honest behaviors. No amount of data from this observation channel can
distinguish them. This connects the impossibility to classical results on
observational equivalence in statistics and diagnosability in systems
theory~\citep{sampath1995diagnosability}. The theorem formalizes a property of the
existing LLM interface, one we did not design, and identifies the precise
architectural assumption (conditioning on $q$ alone) whose relaxation enables
escape (\autoref{sec:design}).

\subsection{The RAG Escape?}

\autoref{thm:representational} might seem to have an obvious escape: give
the model access to external information. Retrieval-augmented generation does
exactly this: the policy conditions on retrieved documents, not just the query.
Does this break the impossibility?

It does not. The problem shifts from representation to learning.

\begin{definition}[Grounded Policy]
A grounded policy is a stochastic function $\pi: \Q \times \D \to \Delta(\R)$,
where $\D$ represents retrieved documents or observations of $w$ available at
inference time. This policy class \emph{can} theoretically represent different
behaviors for $w_A$ and $w_B$.
\end{definition}

\begin{definition}[Bounded Supervisor]
The training process relies on a supervisor $S$ that assigns reward $R$ based on
an observation function $Obs(q, r, w)$. The supervisor has a verification budget
$B$.
\[
Obs(q, r, w) =
\begin{cases}
(q, r, \verify(r, w)) & \text{if } C_{verify}(q, r, w) \le B \\
(q, r, \emptyset) & \text{if } C_{verify}(q, r, w) > B
\end{cases}
\]
The supervisor's policy may be human, automated, or hybrid, but is constrained
to act solely on the observation $Obs(q,r,w)$ and is subject to a finite
verification budget $B$.
\end{definition}

\begin{condition}[Hallucination Regime]\label{cond:hallucination}
There exists a plausible fabrication $r_{fab}$ and contexts $(q, w_A), (q, w_B)$
from \autoref{cond:ambiguity} such that:
\begin{enumerate}
\item The cost to verify $r_{fab}$ in both contexts exceeds $B$ (e.g., obscure
  or non-existent citation).
\item Consequently, the supervisor observation is identical:
  \[ Obs(q, r_{fab}, w_A) = Obs(q, r_{fab}, w_B) = (q, r_{fab}, \emptyset) \]
\end{enumerate}
\end{condition}

\begin{theorem}[Learnability Impossibility]\label{thm:learnability}
No learning algorithm $\mathcal{A}$ optimizing reward $R$ from a bounded
supervisor $S$ can converge to a Grounded Policy that satisfies Epistemic
Honesty across both contexts $w_A$ and $w_B$ for the response $r_{fab}$.
\end{theorem}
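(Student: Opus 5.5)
The plan is to reduce the learning problem to an indistinguishability argument, in the spirit of \autoref{thm:representational}. The key point is that the learning algorithm never sees $w$ directly; it sees only the supervisor's rewards. I will make the standing assumption explicit: the reward is a (possibly randomized) function of the observation alone, $R = f(Obs(q,r,w))$. This is exactly what the Bounded Supervisor definition asserts.

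\textbf{Step 1 (equal rewards).} By \autoref{cond:hallucination}, $Obs(q,r_{fab},w_A) = Obs(q,r_{fab},w_B) = (q,r_{fab},\emptyset)$. Hence the reward distribution assigned to emitting $r_{fab}$ is the same in both contexts; call its expectation $\bar R_{fab}$.

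\textbf{Step 2 (no gradient toward honest behavior).} The honest target requires different behavior in the two contexts: answer correctly in $w_A$ and abstain in $w_B$, each with probability at least $1-\epsilon$. Here $r_{fab}$ is the fabricated answer, so it is disallowed in $w_B$. In $w_A$ it is either not the honest response or is unverifiable. Any training signal that would push the grounded policy $\pi(\cdot\mid q,d)$ to put different mass on $r_{fab}$ across the two contexts has to come from a reward difference between them. Step 1 says no such difference exists on $r_{fab}$. I will formalize this by noting that the expected objective $J(\pi)$ has the same partial derivative, $\bar R_{fab}$, with respect to $\pi(r_{fab}\mid q,d_A)$ and $\pi(r_{fab}\mid q,d_B)$. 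So the objective is symmetric under swapping the mass $r_{fab}$ receives in the two contexts. For a generic algorithm $\mathcal A$, rather than a gradient method, I will argue via the data: the transcript distribution that $\mathcal A$ receives for $r_{fab}$ samples is identical whether the underlying context is $w_A$ or $w_B$. Any statistic of that transcript is therefore equidistributed under the two contexts.

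\textbf{Step 3 (failure of convergence).} Suppose $\mathcal A$ converged to a policy that was honest in both contexts. In particular, its limiting mass on $r_{fab}$ must be at most $\epsilon$ in $w_B$. Fabrication is penalized only through the fraction of cases where verification succeeds, and that fraction is zero here. So if $\bar R_{fab}$ is at least the reward for abstention $\bot$, the optimum places mass on $r_{fab}$ in $w_B$ exceeding $\epsilon$, contradicting honesty. The plausibility premise of \autoref{cond:hallucination} gives exactly this: a fluent, unverifiable answer is rewarded no worse than abstention. If instead $\bar R_{fab}$ is below the reward for $\bot$, abstention dominates in both contexts. In $w_A$ this blocks mass on the correct answer whenever that answer is equally unverifiable, so honesty fails in $w_A$. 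The two cases then fall into the same contradiction, $2(1-\epsilon)>1$, as in \autoref{thm:representational}, now applied to the reward-optimal behavior rather than to $\pi$ itself.

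\textbf{Main obstacle.} The hard part is that the statement as written says ``no algorithm can converge,'' yet a grounded policy \emph{can} represent the honest behavior. An algorithm might reach it by luck or through an inductive bias, for example by initializing to abstain when $d$ is empty. Steps 2--3 only show that reward optimization gives no pressure toward honesty and that honest behavior is not a unique optimum. To turn this into a genuine impossibility, I would weaken the claim to one of two forms. Either (i) honesty is not a strict maximizer of expected reward, so it cannot be the guaranteed limit of a reward-optimizing $\mathcal A$; or (ii) for any $\mathcal A$ there is a paired instance, swapping the roles of $w_A$ and $w_B$, on which $\mathcal A$ fails, because its output depends only on the identical observation stream. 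I would try (ii) first, since it is the cleanest adversary argument.
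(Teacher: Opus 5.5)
Your Step 1 and the transcript part of Step 2 match the paper's steps 1--4: equal observations give equal rewards, which give identical expected updates. The gap is in how you close. The paper finishes without any reward comparison. Identical updates give the optimizer nothing to separate the worlds, so the learned behavior on $r_{fab}$ must coincide in $w_A$ and $w_B$. Honesty, however, demands the split ``$r_{fab}$ in $w_A$, $\bot$ in $w_B$'', where the paper takes $r_{fab}$ itself to be the possibly-correct answer in $w_A$. The preference for fabrication over abstention is explicitly set aside in the ``Empirical Bias'' remark.

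Your Step 3 rests on exactly that comparison, and it does not go through:
\begin{itemize}
\item \autoref{cond:hallucination} says only that verification exceeds $B$ and the observations coincide. It does not give ``rewarded no worse than abstention.''
\item You treat the reward of $\bot$ as a single world-independent number without justification.
\item At the tie $\bar R_{fab}=R(\bot)$, the reward maximizers include honest policies. So ``the optimum places mass exceeding $\epsilon$ on $r_{fab}$'' fails exactly there.
\item The final appeal to $2(1-\epsilon)>1$ is misplaced. Each branch of your case split contradicts honesty in one world directly. The counting argument of \autoref{thm:representational} needs a single distribution serving both worlds, and you have not shown that for $\pi(\cdot\mid q,d)$.
\end{itemize}
Your hedge that $r_{fab}$ in $w_A$ is ``not the honest response or unverifiable'' also hurts you. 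If the honest $w_A$-answer were a different, verifiable string, its reward would give the learner a distinguishing signal. You need the paper's identification of that answer with $r_{fab}$.

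The obstacle you flag is real, and it is exactly the step the paper asserts rather than derives. Identical parameter updates do not by themselves force a grounded policy to behave identically on $d_A\neq d_B$. The paper's Lean artifact checks only the equality of updates and says convergence ``follows informally.'' To recover the paper's proof, take its step 6 as an explicit modeling assumption: the learned treatment of $r_{fab}$ depends on the context only through the supervisor's signal. Do not try to extract it from reward optimality.

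If you want rigor instead, your option (ii) is the right repair, and it is where $2(1-\epsilon)>1$ legitimately enters. Apply it to the single distribution $\pi_\theta(\cdot\mid q,d_A)$ across two instances that differ only in whether $d_A$ accompanies an answerable or an unanswerable world. It shows that the probabilities of $\mathcal{A}$ being honest on the two instances sum to at most $1$. This has three costs:
\begin{itemize}
\item The whole training transcript must be world-independent, including rewards on $\bot$ and on every other sampled response. \autoref{cond:hallucination} pins down only $r_{fab}$.
\item The swapped instance must be admissible, meaning $d$ does not certify $w$.
\item The conclusion becomes ``every $\mathcal{A}$ fails on one of the pair,'' not the per-instance statement as written.
\end{itemize}
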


\begin{proof}
\begin{enumerate}
\item The learning update $\Delta \theta$ is a function of the supervisor's
  observation and the reward signal derived from it.
\item By \autoref{cond:hallucination}, the supervisor cannot verify $r_{fab}$
  in either world (verification cost exceeds budget $B$). Consequently, the
  observations are identical:
  $\text{Obs}(q, r_{fab}, w_A) = \text{Obs}(q, r_{fab}, w_B)$.
\item Because the supervisor's reward $R(q, r, w)$ is itself a function of its
  observation (the supervisor has no oracle access to $w$), identical
  observations yield identical reward distributions:
  $R(q, r_{fab}, w_A) = R(q, r_{fab}, w_B)$.
\item Therefore the expected parameter updates are identical:
  $\mathbb{E}[\Delta \theta | w_A] = \mathbb{E}[\Delta \theta | w_B]$.
  The optimizer receives no signal to distinguish the cases.
\item Epistemic Honesty requires split behavior: output $r_{fab}$ in $w_A$
  (where it may be correct) and $\bot$ in $w_B$ (where it is fabricated).
\item Identical updates under identical observations give the optimizer no
  information to differentiate the worlds. The policy must converge to the same
  behavior in both cases.
\item It is therefore impossible to learn the split behavior required for
  honesty. \qedhere
\end{enumerate}
\end{proof}

\begin{remark}[Empirical Bias]
While the theorem proves only that the behavior must be \emph{identical},
empirical observation shows a strong bias. Supervisors typically prefer
``plausible completion'' over ``abstention'' when truth is unknown. This
effectively breaks the tie in favor of fabrication.
\end{remark}

\begin{remark}[Scope: Deployment-Time Supervision]
The bounded supervisor models \emph{deployment-time} verification, where a
supervisor must evaluate outputs under time and cost constraints. Training-time
data curation (e.g., RLHF annotation) operates under a different cost regime:
human annotators may spend unbounded time per example, and datasets are curated
offline with full information about world state.

This paper addresses deployment-time verification: the setting where a system is
already trained and models are answering user queries under budget constraints.
Training-time retraining to enforce honesty is outside this scope. The
impossibility applies when real-time verification cost exceeds the supervisor's
per-query budget at inference time.
\end{remark}

\subsection{Stacking Judges Cannot Escape}

A natural response to \autoref{thm:learnability} is to propose layered
supervision: if one judge is bounded, stack more judges. Let the second judge
supervise the first, the third supervise the second. Surely sufficient layers
will catch fabrications that slip through?

The Observation Monotonicity Lemma closes this escape.

\begin{lemma}[Observation Monotonicity]\label{lem:monotone}
Consider any stack of supervisors $(S_1,\dots,S_k)$ operating under the
text-only observation model. Let $Obs_i(q,r,w)$ denote the observation exported
to $S_i$. If each supervisor can only act on the exported interface emitted by
the previous layer, then there exist deterministic functions $f_i$ such that
$Obs_{i+1}(q,r,w) = f_i(Obs_i(q,r,w))$ for all $i < k$. Consequently, later
judges see no strictly finer epistemic information than earlier ones, regardless
of whether their judgments are deterministic or probabilistic.
\end{lemma}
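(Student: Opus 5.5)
The proof unwinds the hypothesis ``each supervisor can only act on the exported interface emitted by the previous layer'' into a formal statement about information flow. Under the text-only observation model, the interface emitted by layer $i$ to layer $i+1$ is whatever $S_i$ produces from its own observation: a tuple consisting of a fixed-position pass-through of $(q,r)$ or a projection of it, together with $S_i$'s judgment $J_i$. So I will define $Obs_{i+1} := (E_i(Obs_i), J_i)$, where $E_i$ is the export map (a deterministic projection, since the text-only model forbids injecting per-token distributions or fresh world-state access). The key modeling point is that $S_{i+1}$ has no oracle access to $w$. Any dependence of $Obs_{i+1}$ on $w$ must therefore route through $Obs_i$. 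This is exactly the assumption used in step 3 of the proof of \autoref{thm:learnability}, and the argument will make it explicit.

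The main obstacle is probabilistic judgments, because $J_i$ may depend on internal randomness. I will handle this with the standard functional-representation trick. Write $J_i = g_i(Obs_i, \xi_i)$, where $\xi_i$ is randomness independent of $(q,r,w)$. Then I can either fix the seed, or absorb $\xi_i$ as an auxiliary argument and define $f_i(o) := (E_i(o), g_i(o,\xi_i))$ for each realization of $\xi_i$. The first option gives a deterministic $f_i$ pointwise. The second is the weaker but sufficient claim that $w \to Obs_i \to Obs_{i+1}$ is a Markov chain, i.e., $Obs_{i+1}$ is conditionally independent of $w$ given $Obs_i$.

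Induction on $i$ then composes these maps: $Obs_j = f_{j-1}\circ\cdots\circ f_i(Obs_i)$ for $j>i$. The consequence about ``no strictly finer epistemic information'' can be stated in two ways. First, the partition of $(q,r,w)$-space induced by $Obs_j$ is coarser than or equal to that induced by $Obs_i$: if $Obs_i(q,r,w_A)=Obs_i(q,r,w_B)$ then $Obs_j(q,r,w_A)=Obs_j(q,r,w_B)$. Second, in the probabilistic case, the data-processing inequality gives $I(w;Obs_j)\le I(w;Obs_i)$. Applying the first form to the pair from \autoref{cond:hallucination} shows that every layer in the stack sees identical observations for $r_{fab}$ in $w_A$ and $w_B$. \autoref{thm:learnability} therefore applies verbatim to the whole stack.
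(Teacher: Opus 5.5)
Your proposal is correct and follows the paper's proof. Both reduce the hypothesis that each layer sees only the previous layer's export, with no oracle access to $w$, to the factorization $Obs_{i+1}=f_i(Obs_i)$. Both then read monotonicity off as the coarsening statement that equal $Obs_i$ forces equal $Obs_j$; the Lean lemma \texttt{observation\_monotonicity} proves exactly this by rewriting. Your handling of randomized judges is a more careful version of what the paper covers in one clause. You write $J_i=g_i(Obs_i,\xi_i)$ with $\xi_i$ independent of $w$, which yields the Markov chain and the data-processing bound. This matters because a sampled judgment is not literally a deterministic function of $Obs_i$ unless the seed is fixed, as you do.
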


\begin{proof}
By definition, each supervisor consumes only the serialized response (and any
metadata permitted by the text-only observation model) and produces a finite
judgment that forms the entire input to the next layer. Because no layer can
query the internal world state directly or issue additional interactive probes,
the exported observation available to $S_{i+1}$ must be a deterministic function
of $Obs_i$. Therefore, the informational content across the stack is
monotonically non-increasing, and probabilistic scoring cannot introduce
distinctions absent from $Obs_i$.
\end{proof}

\begin{corollary}[Induction on Layered Judges]\label{cor:layered}
For $k$-layered supervisors ($S_1$ supervising $\pi$, $S_2$ supervising $S_1$,
$\dots$, $S_k$ supervising $S_{k-1}$):
\begin{itemize}
\item \textbf{Base case.} For $k=1$, \autoref{thm:learnability} holds (a
  single bounded supervisor $S$ fails).
\item \textbf{Induction step.} Assume the statement holds for $k=m$. By
  \autoref{lem:monotone}, $S_{m+1}$ receives observations that are
  deterministic functions of $Obs_m$, and thus inherits both the verification
  budget $B$ and the indistinguishability of plausible fabrications. No learning
  or judging signal distinguishes $w_A$ from $w_B$, even if $S_{m+1}$ produces
  probabilistic confidence scores. Therefore the hallucination deadlock persists
  for $k=m+1$.
\end{itemize}
By induction, no finite $k$ resolves the deadlock.
\end{corollary}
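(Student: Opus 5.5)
The plan is to prove Corollary~\ref{cor:layered} by induction on the number of layers $k$. The quantity carried through the induction is the \emph{observation pair} $(Obs_k(q,r_{fab},w_A),\,Obs_k(q,r_{fab},w_B))$. The claim to establish is $P(k)$: the two observations exported to $S_k$ coincide. Once $P(k)$ is in hand, I will reuse the argument of \autoref{thm:learnability} with $S_k$ in place of $S$. Steps (3)--(7) of that proof need only one fact: the reward, or judgment, is a function of an observation that is the same in both worlds. So they go through unchanged, and the learner again receives identical expected updates in $w_A$ and $w_B$.

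\textbf{Base case} ($k=1$). This is exactly \autoref{cond:hallucination}(2). It gives $Obs_1(q,r_{fab},w_A)=Obs_1(q,r_{fab},w_B)=(q,r_{fab},\emptyset)$, and \autoref{thm:learnability} then yields the deadlock.

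\textbf{Inductive step.} Assume $P(m)$. \autoref{lem:monotone} supplies a deterministic $f_m$ with $Obs_{m+1}=f_m\circ Obs_m$. Applying $f_m$ to equal arguments gives equal values, so $P(m+1)$ follows immediately.

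The one subtlety is a stochastic judge: $S_m$ may emit a probabilistic score that the next layer consumes. I will handle this by treating the exported object as the \emph{distribution} of $S_m$'s judgment given $Obs_m$. That distribution is itself a deterministic function of $Obs_m$, so equality still propagates. Any realized sample therefore has the same law in $w_A$ and $w_B$, and expected updates agree, which is all step (4) of \autoref{thm:learnability} requires.

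I must also check that the verification budget is inherited. Layer $m+1$ has no access to $w$ beyond $Obs_m$. It therefore cannot invoke $\verify(r_{fab},w)$, and in any case its cost exceeds $B$. So no new world-dependent component enters $Obs_{m+1}$.

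The main obstacle is justifying that hypothesis of \autoref{lem:monotone}, namely that each layer acts only on the previous layer's export. If some $S_{m+1}$ could issue its own probes of $w$, the function $f_m$ would take $w$ as an extra argument, and the induction would break. I would make this explicit as the standing text-only and bounded-budget assumption. Any such probe is a verification of $r_{fab}$, and \autoref{cond:hallucination}(1) says its cost exceeds $B$ in both worlds, so it returns $\emptyset$ and preserves equality. With that settled, induction gives $P(k)$ for every finite $k$, and hence the deadlock at every depth.
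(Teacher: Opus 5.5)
Your proposal is correct and follows essentially the paper's route: the base case is Theorem~\ref{thm:learnability} applied via Condition~\ref{cond:hallucination}, and the inductive step is that the deterministic maps $f_m$ of Lemma~\ref{lem:monotone} send equal observations to equal observations, which is exactly what the paper's Lean version (a fold of layer functions over equal inputs) checks. One small caveat: your closing claim that any higher-layer probe is a verification of $r_{fab}$ does not follow from Condition~\ref{cond:hallucination}(1), which only bounds the cost of verifying $r_{fab}$ itself; a cheap probe of a single sub-claim could separate $w_A$ from $w_B$, so you should take the no-probing clause as the standing hypothesis of Lemma~\ref{lem:monotone}, as the paper does.
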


\begin{remark}[Scope of Layering]
The inductive result applies only to verification stacks whose layers observe
and transform the same exported response channel. Triage architectures, where
a first-pass judge selects outputs for re-observation by a second judge that
inspects the \emph{original} artifact (including our own composed judge,
\autoref{sec:eval}), are not information-lossy pipelines and fall
outside this impossibility. Similarly, architectures that surface additional
epistemic state between layers, such as retrieval traces, entity resolution
tables, or cryptographically bound provenance, leave the text-only observation
model and therefore fall outside the impossibility class.
\end{remark}

\subsection{Verification Cost}

The bounded-supervisor assumption deserves scrutiny. Why should verification
cost grow with response complexity? The Composition Graph formalizes this
intuition.

\begin{definition}[Composition Graph]
Given a response $r$, define its composition graph $G(r) = (V, E)$ where each
node $v \in V$ represents an atomic subclaim, named entity, or intermediate
entailment, and each edge $(u, v) \in E$ encodes a dependency that must hold for
$r$ to remain coherent. Edges may represent entailment/support relations,
co-reference constraints, causal ordering, or temporal consistency requirements.
\end{definition}

\begin{lemma}[Superlinear Verification Cost]\label{lem:superlinear}
Under a text-only observation model, reconstructing $G(r)$ from the exported
string requires the supervisor to infer both the nodes and their dependencies
directly from $r$. Even if parsing each node is constant-time, checking
cross-claim consistency requires touching every edge, so the verification cost
is $\Omega(|E|)$. For natural-language responses where each subclaim
participates in multiple constraints (co-reference, causality, temporal order),
$|E|$ grows superlinearly in $|V|$. Consequently, bounded supervisors must
either abandon exhaustive checks or resort to sampling once the number of
subclaims exceeds a modest threshold.
\end{lemma}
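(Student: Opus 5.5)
The plan is to split the statement into a lower bound on cost in terms of $|E|$ and a growth claim about $|E|$ in terms of $|V|$. The lower bound comes from an adversary (indistinguishability) argument in the style of \autoref{thm:representational}. The growth claim needs a modelling assumption about natural-language responses, and I would make that assumption explicit rather than try to derive it.

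\textbf{Step 1: the $\Omega(|E|)$ lower bound.} Fix a response $r$ with composition graph $G(r)=(V,E)$. Suppose some supervisor declares $r$ coherent without inspecting some edge $e=(u,v)\in E$. I would construct a companion response $r'$ that agrees with $r$ on every node and on every edge except $e$, and whose dependency $e$ fails. Examples: a co-reference that resolves to a different entity, or a temporal order that is reversed. Since the supervisor's transcript on $r$ never consults $e$, its transcript on $r'$ is identical, so it must give both the same verdict. This is the same ``identical observations force identical behavior'' step used in \autoref{thm:learnability}. Hence any sound verifier must touch every edge, and cost is $\Omega(|E|)$; the parsing cost for nodes only adds $\Omega(|V|)$. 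Under the text-only model of Definition~1, the supervisor receives only $r$ and no precomputed $G(r)$. So it cannot inherit an edge list and must do this work itself, and the cost bound applies to it directly.

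\textbf{Step 2: superlinear growth of $|E|$.} This is the main obstacle, because it is not a theorem about arbitrary graphs: a response whose subclaims form a chain has $|E|=|V|-1$. I would therefore state an explicit density condition. One version says that each new subclaim incurs constraints with a growing fraction of earlier subclaims, for example through temporal consistency across all time-stamped events or co-reference across all mentions of an entity. A weaker version says the average degree $d(|V|)\to\infty$. Under either condition, $|E|=\tfrac12\sum_v \deg(v)=\omega(|V|)$. Temporal constraints alone give pairwise comparisons among $t$ dated claims, which is $\Theta(t^2)$ edges unless transitivity is exploited. So I would also show that a supervisor seeing only text cannot assume transitivity holds in a possibly fabricated response: a fabrication can be locally consistent and globally inconsistent. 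This keeps the $\Theta(t^2)$ checks necessary. Here I would be explicit that the bridge is an informal modelling assumption, in the sense of the Remark on the epistemic status of formal results.

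\textbf{Step 3: the budget consequence.} Combine the two steps: verification cost is at least $c\,|E| = \omega(|V|)$. So for any fixed budget $B$ there is a threshold $|V|_0$ beyond which cost exceeds $B$. At that point $Obs$ returns $(q,r,\emptyset)$ unless the supervisor checks only a sample of edges. If it samples a fraction $p$ of the edges, the adversarial $r'$ from Step 1 escapes detection with probability $1-p$. This instantiates \autoref{cond:hallucination} and yields the stated dichotomy between abandoning exhaustive checks and sampling.
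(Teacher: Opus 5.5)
Your decomposition matches the paper's. The paper gives no separate proof; its argument is the lemma text itself: every edge must be touched, so cost is $\Omega(|E|)$; natural-language dependencies are dense, so $|E|$ is superlinear in $|V|$; so a bounded $B$ is eventually exceeded. Your growing-degree condition is a real improvement, since ``each subclaim participates in multiple constraints'' gives only the linear bound $|E|\ge k|V|/2$ when ``multiple'' means at least a constant $k$.

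The gap is in how Steps~1 and~2 fit together. The adversary in Step~1 needs an edge whose status can be flipped without changing anything else the supervisor reads. As written, $r'$ agrees with $r$ on every node yet violates $e=(u,v)$. That is impossible whenever the validity of $e$ is determined by the contents of $u$ and $v$, as it is in both your examples. If you alter $u$ or $v$ instead, a supervisor that parses nodes sees the change, so the premise that it ``never consults $e$'' has no force.

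This fails exactly where you need density. Under the lemma's own premise of constant-time node parsing, temporal consistency among $t$ dated claims involves $\Theta(t^2)$ pairwise constraints, yet sorting checks all of them in $O(t\log t)$. Likewise, co-reference consistency across all mentions of an entity is checked by resolving each mention once and grouping. So the dense families you use to get $|E|=\omega(|V|)$ are precisely those for which $\Omega(|E|)$ is false.

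Your patch, that a supervisor ``cannot assume transitivity'' in a fabricated response, does not help. Transitivity is a property of time, not of $r$. A locally consistent but globally inconsistent set of asserted orderings such as $A<B<C<A$ is caught by one topological sort, in time linear in the number of asserted relations, not by $\Theta(t^2)$ pairwise checks.

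To repair the argument, build the needed independence into the modelling assumption. Each dependency is settled only by its own call to $\verify$. In addition, for a coherent $r$ and every edge $e$, some realizable instance agrees with $r$ on everything the supervisor can observe except the outcome of the check on $e$, which fails. This is full sensitivity, in query-complexity terms. Step~1 then goes through. But the density hypothesis of Step~2 must be asserted for edges with this property, and the temporal and co-reference families as you describe them do not supply it. The paper's informal statement makes the same assumption tacitly; your attempt to discharge it is the step that fails.

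Note also that for a fixed $B$, the Step~3 dichotomy already follows from any unbounded cost, even the $\Omega(|V|)$ parsing term. Superlinearity matters only if budgets are allowed to scale with response size, and there the independence assumption is doing the real work.
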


\subsection{Responsibility Concentration}

\begin{corollary}[Responsibility Concentration]\label{cor:responsibility}
Under a text-only observation model, epistemic honesty cannot be externally
verified by users or downstream auditors. Responsibility for epistemic honesty
therefore resides solely with the system owner, who retains access to the
internal causal state.
\end{corollary}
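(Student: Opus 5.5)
The plan is to obtain \autoref{cor:responsibility} as a direct consequence of the earlier results, by sorting every party who might certify honesty into one of two classes: those limited to the text-only observation $O(S)=r$, and those with access to the internal state $S$. First I will fix what ``externally verified'' means. A user or downstream auditor is external exactly when its information about the system is some (possibly randomized) function of the exported observation $O(S)$, together with whatever verification it can afford under a finite budget $B$, i.e.\ it acts on $Obs(q,r,w)$ as in the Bounded Supervisor definition. Verifying epistemic honesty means producing a verdict that is correct in both $w_A$ and $w_B$ of \autoref{cond:ambiguity}, with those worlds instantiated by the hallucination regime of \autoref{cond:hallucination}.

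The core step is an indistinguishability argument. Under \autoref{cond:hallucination}, $Obs(q,r_{fab},w_A)=Obs(q,r_{fab},w_B)$. Any verdict an external auditor issues is a function of this observation, so it has the same distribution in both worlds. The correct verdicts differ, however: $r_{fab}$ is honest in $w_A$ and a fabrication in $w_B$. So for any auditor, the sum of its success probabilities in the two worlds is at most $1$, and it must err with probability at least $1/2$ in one of them. This mirrors the $\epsilon<0.5$ counting step in \autoref{thm:representational}. To cover audit chains, such as a regulator reviewing a third-party evaluator, I will invoke \autoref{lem:monotone} and \autoref{cor:layered}: each further layer sees a deterministic function of the previous observation, so composing auditors cannot separate the two worlds. \autoref{lem:superlinear} explains why the budget condition is the typical case rather than an edge case, so the regime in which external verification fails is not vacuous.

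The second half argues that the system owner escapes the impossibility. The owner can choose a richer observation model $O'(S)$ that includes retrieved documents, tool outputs, and per-token distributions, so the observations it gets in $w_A$ and $w_B$ need not coincide. I will exhibit this by noting that the grounded-policy state $\D$ differs between the two worlds, so the owner's observation can differ too. This shows the owner is not subject to the argument above, which places the capacity for verification, and hence responsibility, with the owner alone.

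I expect the main obstacle to be this last step. ``Responsibility resides solely with the owner'' is a normative claim, and the formal content only establishes that external verification is impossible while owner verification is not ruled out. I would handle this by stating an explicit bridging premise: responsibility for a property attaches to exactly those parties able to verify it. I would flag this premise as an informal argument in the sense of the paper's epistemic-status remark. I would also be careful to claim only that the owner's access makes verification possible in principle, not that it is guaranteed.
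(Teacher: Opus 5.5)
Your proposal is correct and is essentially the paper's argument. The paper gives no separate proof: it presents the corollary as immediate from the text-only impossibility results, and it adds only that external attestation mechanisms give verifiable guarantees solely when the owner chooses to bind and export epistemic traces. Your verdict-level indistinguishability and your appeal to \autoref{lem:monotone} make that first step explicit, provided $w_A$ is chosen so that $r_{fab}$ is its correct answer, which the paper also assumes without saying so. The paper's attestation remark also suggests a sturdier form of your bridging premise: tie responsibility to control over the observation model rather than to the ability to verify. For a predictor-centric policy nothing in $S$ need depend on $w$, so even the owner's full view of $S$ may not separate $w_A$ from $w_B$, yet only the owner can change what is exported.
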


External attestation mechanisms, such as cryptographic hash chains, trusted execution
environments, or tamper-evident logs, do not contradict this corollary; rather,
they reinforce it. Such mechanisms only create verifiable guarantees when the
system owner chooses to bind and export epistemic traces, so the authority to
make honesty attestable still rests with the owner.

\section{The Tensor Interface Design}\label{sec:design}

If text-only observation is architecturally insufficient, what observation model \emph{does} suffice?

The answer is not to build new models or change training procedures. It's to export signals the model generates anyway: internal telemetry that is structurally coupled to what the model computes.

\subsection{What the Tensor Interface Is}

The tensor interface augments the text output with structured metadata about the generation process. In our implementation, we export:

\begin{enumerate}
    \item \textbf{Per-token entropy}: For each token in the generated sequence, the entropy of the probability distribution over the next token. Low entropy means the model was decisive; high entropy means uncertain. This is the primary signal used in evaluation.

    \item \textbf{Attention summaries}: Statistical measures of attention coherence across layers. A coherent generation has consistent attention patterns; incoherent generations show fragmented attention. Measured but not used in the bounded verification experiment.

    \item \textbf{Log-probabilities}: The probability the model assigned to each token actually generated. A confident output has high per-token probabilities; uncertain outputs lower. Measured for analysis and future work.

    \item \textbf{Provenance markers}: For retrieval-augmented systems, pointers to which sources were retrieved.
\end{enumerate}

This is not new information. The model computes all of this during inference. The tensor interface simply exports what the model already computed, rather than discarding it.

\textbf{Cost}: Exporting per-token entropy adds ${\sim}2.4\%$ overhead. The full signal set (entropy + attention analysis + log-probabilities) would add ${\sim}7.1\%$ overhead (see benchmark details in Appendix). The bounded verification experiments (\autoref{sec:eval}) use entropy only, yielding the 2.4\% cost for the accuracy gains reported.

\subsection{Why These Signals Work}

The core property: \textbf{Under standard training, the model cannot independently tune these signals without affecting correctness.} Whether adversarial fine-tuning could break this coupling is an open question (\autoref{sec:discussion}).

Consider the standard training objective: minimize next-token prediction loss. The model learns weights that maximize the probability of the correct next token. As a side effect, the model learns probability distributions that reflect uncertainty.

Now imagine a model trained with RLHF to be helpful and harmless. It learns text patterns, hedging language, citation formats. But it cannot learn to:
\begin{itemize}
    \item Produce high-confidence probability distributions (low entropy) while simultaneously outputting wrong tokens, because confidence and token probability are computed together.
    \item Maintain coherent attention patterns while the underlying computation is incoherent, because attention is part of the computation.
    \item Assign high probability to tokens it doesn't actually compute as likely, because log-probabilities come from the probability distribution itself.
\end{itemize}

In other words: these signals are computationally constrained in a way text is not.

Text is fully controllable through training. A model can learn to sound confident, cite sources, hedge appropriately, all while fabricating. The text channel is a controlled interface.

Entropy and attention patterns are byproducts of computation. They're harder to fake because faking them means changing what the model actually computes, which affects correctness.

This is why entropy generalizes across architectures while text features diverge (\autoref{sec:eval}). Entropy is architectural; text features are behavioral.

\subsection{Three Architectural Principles}

The FLP impossibility characterizes what observation models lack. Escaping the impossibility requires three properties:

\subsubsection{Principle 1: State Exteriority}

\textbf{The representation of validity must be separated from the representation of generation.}

The representational impossibility arose because the policy conditions only on the query, not on whether the query is answerable. But a truly honest system must \emph{know} whether the answer exists.

State Exteriority means: the system must condition on external world state $w$ at inference time, not just query $q$. For a RAG system, this means the retrieved documents. For a system with access to structured data, the database. For a theorem prover, the axioms and lemmas available.

Importantly: external state must have its own integrity guarantees. Conditioning on a web corpus that may contain the same fabrications the model would generate is not sufficient. State Exteriority requires grounding in sources with verifiable provenance: curated databases, sensor data, cryptographic signatures, or oracles with known reliability.

Retrieval-augmented generation~\citep{lewis2020retrieval} makes progress here; it retrieves documents rather than relying entirely on parametric knowledge. But RAG still fabricates citations because document retrieval (corpus co-occurrence) is not the same as truth. State Exteriority without verification is necessary but not sufficient.

\subsubsection{Principle 2: Verification Independence}

\textbf{Verification signals must originate from a channel orthogonal to the generation signal.}

The learnability impossibility arose because the model is trained end-to-end: the same loss that rewards correct generation also rewards confident fabrication, since confidence (reflected in text patterns) is rewarded by RLHF even when the underlying answer is wrong.

Verification Independence means: the verification signal must come from a channel that cannot be gamed by improving task performance. One implementation: a separate verification head whose reward is decoupled from task performance. The model has no incentive to lie in the verification channel because doing so doesn't improve its score on the main task.

But if that verification head outputs text, it remains subject to the same observational limitations. Full Verification Independence requires the verification channel to access either:
\begin{itemize}
    \item Internal computational state (entropy, attention patterns)
    \item External ground truth (does this fact check out?)
    \item Both
\end{itemize}

The tensor interface provides the first: the verification channel is entropy and attention coherence, which are not optimized by any standard training objective.

\subsubsection{Principle 3: Provenance Binding}

\textbf{Every output assertion must be structurally bound to a verifiable source.}

When the system makes a claim, the claim should include (or pointer to) the source of that claim. Not as text annotation (which can be fabricated), but as structured metadata that can be independently checked.

For a RAG system: which document did this claim come from? If the system fabricates, the provenance pointer will either (a) point to a source that doesn't support the claim (detectable by structured lookup), or (b) not point to any source.

For a system with tool access: which tool provided this information? When did the tool return it? With what confidence?

Provenance Binding makes certain failure modes detectable: citation verification, source checking, tool output verification. It doesn't solve the problem alone (a supervised system could still return false provenance), but it creates a verification tier for failure modes that have external checkability.

\subsection{How These Principles Escape the Impossibility}

\begin{itemize}
    \item \textbf{State Exteriority} escapes the representational impossibility by giving the system access to information (world state) that allows it to distinguish answerable from unanswerable queries.
    \item \textbf{Verification Independence} escapes the learnability impossibility by providing a training signal decoupled from task performance.
    \item \textbf{Provenance Binding} creates a tier of failure modes (citations, facts) that are verifiable through external means.
\end{itemize}

Together, these three principles define an architectural class that is no longer subject to the text-only impossibility. But they are necessary conditions, not sufficient. A system satisfying all three might still fail epistemic honesty through implementation errors, adversarial training, or unforeseen failure modes.

\subsection{Implementation: From Theory to Practice}

In our experiments, we implement a simplified version of these principles:

\begin{itemize}
    \item \textbf{State Exteriority}: The query set distinguishes answerable from unanswerable by construction. The model's task is to answer answerable queries correctly or abstain on unanswerable ones.

    \item \textbf{Verification Independence}: We do not retrain the models. We measure entropy as an output of inference, not as a training objective. Entropy is independent of any training signal the model received.

    \item \textbf{Provenance Binding}: For citation queries, we add a bounded lookup judge that checks whether cited sources exist. This is a tier of verification orthogonal to entropy-based triage.
\end{itemize}

The result: four judge conditions testing different combinations of these principles, evaluated at three budget levels, across four architectures.

\textbf{Scope of empirical evaluation.} Our experiments primarily evaluate Verification Independence (entropy as a triage signal). State Exteriority is present by construction (the query set has known ground truth) but is not independently varied. Provenance Binding is tested only for the citation sub-category via a bounded lookup judge. Independent empirical evaluation of State Exteriority and Provenance Binding across diverse deployment conditions is future work.

\subsection{Design Decisions and Trade-offs}

\fakepara{Signal choice.} Why entropy and not attention? Why not internal activations?

Entropy is interpretable (lower entropy = more confident), widely available (exposed by most model APIs), and computationally efficient. Attention patterns require more careful aggregation (which layers? which attention heads?) and have more degrees of freedom for adversarial manipulation. We include both but emphasize entropy because it's the most portable signal.

Internal activations (hidden states) contain richer information but require access to the model internals (ruled out for closed-weight models) and are less stable across architectures.

\fakepara{Composition strategy.} Why separate judges for different query types?

Because different failure modes require different signals. The impossibility says you cannot use text alone. It doesn't say one signal suffices for everything. Entropy fails on citations (inversion); bounded lookup fails on open-ended questions (no external source to check). Composition allocates a verification budget across failure classes rather than spreading it uniformly.

\fakepara{Budget levels.} Why 10\%, 20\%, 30\%?

These represent different deployment scenarios. A high-reliability system (medical advice) might use 30\%+ verification. A brainstorming tool needs none. A customer support system might use 10--15\%. The cost surface lets builders calibrate.

\subsection{What the Tensor Interface Requires from Providers}

The tensor interface is not a fundamental property of transformers; it's a choice about what to export. Access to internal signals has eroded as models have scaled (see \autoref{sec:discussion} for the governance implications). A system builder deploying a closed-weight model without access to entropy or log-probabilities is limited to text-only approaches, which the impossibility results constrain.

\section{Empirical Cost Surface}\label{sec:eval}

The theorems tell us what text-only observation cannot do. Experiment~27 tells us what tensor-guided observation can do. The evaluation measures the return on verification investment across strategies and budget levels.

\subsection{The Question}

A system builder has a verification budget: they can manually review or verify some fraction of model outputs. How much accuracy improvement does each level of investment buy? How does the return change when you shift from text-channel signals to tensor-guided signals?

This is not an abstract question. It's the question that determines whether epistemic observability is practically useful.

\subsection{Experimental Design}

We construct a balanced query set of 200 questions:
\begin{itemize}
    \item \textbf{100 knowable queries}: Factual questions with verifiable answers (geography, history, science, biography)
    \item \textbf{100 unknowable queries}: Prompts designed to trigger fabrication (nonexistent people, fictional diseases, future events, fabricated citations)
\end{itemize}

Ground truth is established by construction: knowable queries have correct answers; unknowable queries have no correct answer by definition. The model's task is to either answer correctly or abstain.

We test on four architectures:
\begin{itemize}
    \item \textbf{OLMo-3 7B} (instruction-tuned)~\citep{olmo2025olmo3}
    \item \textbf{Llama-3.1 8B-Instruct}~\citep{llama31}
    \item \textbf{Qwen3 4B-Instruct}~\citep{qwen3}
    \item \textbf{Mistral 7B-Instruct}~\citep{mistral7b}
\end{itemize}

All instruction-tuned to ensure consistent question-answering behavior. Note: Qwen3 4B-Instruct also serves as the Tier~2 LLM classifier for ground truth evaluation (a different task: factual verification, not response generation). The 93.8\% human-agreement rate on blinded samples provides independent cross-validation of this dual role.

\subsection{Judge Conditions}

For each model, outputs are evaluated under four conditions:

\begin{enumerate}
    \item \textbf{No judge baseline} (75.8\% accuracy):
    Raw model output, no verification. Establishes what unverified deployment looks like.

    \item \textbf{Text-guided judge using response length} (79.2--87.6\% depending on budget):
    Selects outputs for verification based on word count.
    Length is a text-channel signal: fabrications tend to be verbose; factual retrieval tends to be concise.
    Available at zero marginal cost; just count tokens.
    Verified fabrications are replaced with abstention (marked as ``the system cannot answer this'').

    \item \textbf{Tensor-guided judge using per-token entropy} (81.7--90.2\% depending on budget):
    Same verification budget, same intervention strategy.
    Selects outputs based on per-token entropy (mean and variance).
    Requires exporting internal signals; represents ${\sim}2.4\%$ overhead in inference cost.

    \item \textbf{Composed judge (hybrid)} (81.1--91.8\% depending on budget):
    Uses tensor signals for general queries.
    Uses bounded lookup (checking if cited sources exist) for citation-specific queries.
    Tests whether different judge strategies can handle different failure modes.
\end{enumerate}

Budget levels: 10\%, 20\%, 30\%. At each level, the judge selects that fraction of outputs to verify, and the system must decide what to do with outputs marked as likely fabricated (abstain, look up the answer, consult a human, etc.).

\subsection{Results: The Cost Surface}

\begin{table}[t]
\centering
\begin{tabular}{lccc}
\toprule
\textbf{Condition} & \textbf{10\% Budget} & \textbf{20\% Budget} & \textbf{30\% Budget} \\
\midrule
No judge & 75.8\% & 75.8\% & 75.8\% \\
Text-guided (length) & 79.2\% & 82.8\% & 87.6\% \\
Tensor-guided & 81.7\% & 86.7\% & 90.2\% \\
Composed & 81.1\% & 87.7\% & 91.8\% \\
\bottomrule
\end{tabular}
\caption{Verification cost surface: accuracy at each budget level by judge strategy. Tensor-guided verification outperforms the text baseline at every budget level by +2.5--3.9 percentage points.}
\label{tab:cost-surface}
\end{table}

\textbf{Key finding: Tensor-guided verification outperforms the text baseline at every budget level by +2.5--3.9 percentage points.} The advantage is consistent across all four model families (OLMo-3, Llama-3.1, Qwen3, Mistral), with cross-model entropy agreement of Spearman $\rho = 0.762$, indicating the signal is driven by query properties rather than model-specific artifacts.

\begin{figure}[t]
    \centering
    \includegraphics[width=0.85\textwidth]{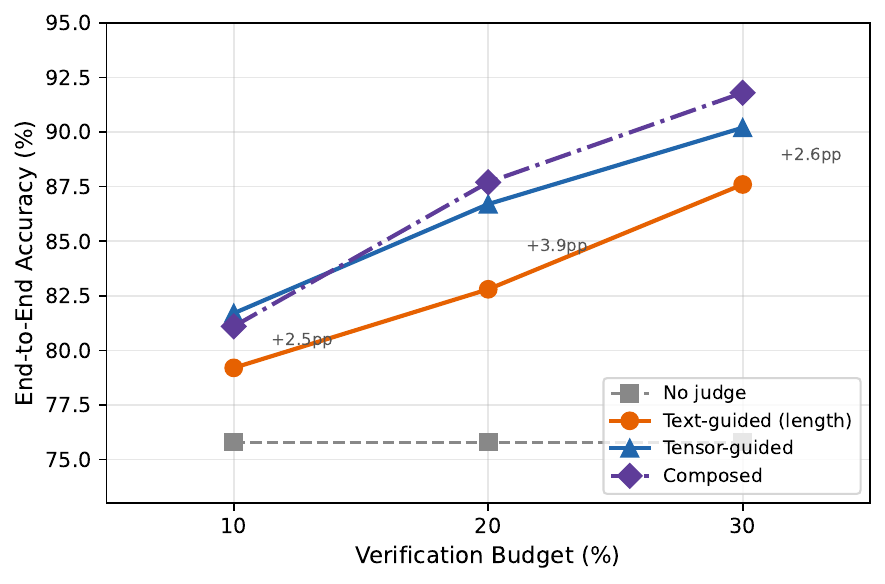}
    \caption{Aggregate budget curve across all models: accuracy as a function of verification budget for each judge strategy. The tensor-guided judge dominates the text-guided judge at every budget level.}
    \label{fig:budget-curve}
\end{figure}

\subsection{Why Tensor Wins: Computational vs. Behavioral Signals}

The text-guided judge uses behavioral signals: response length, hedging markers, citation patterns. These are patterns the model can learn through training. On our balanced test set, no single text feature exceeds AUC 0.70.

Per-token entropy, by contrast, remains discriminative: pooled AUC 0.757 across all models, consistent across architectures.

\textbf{Why?} Text features are trainable behavioral patterns; entropy is a computational byproduct that resists manipulation under standard training (\autoref{sec:design}). Whether adversarial fine-tuning could decouple entropy from correctness is an open empirical question (see \autoref{sec:discussion}).

\begin{figure}[t]
    \centering
    \includegraphics[width=0.85\textwidth]{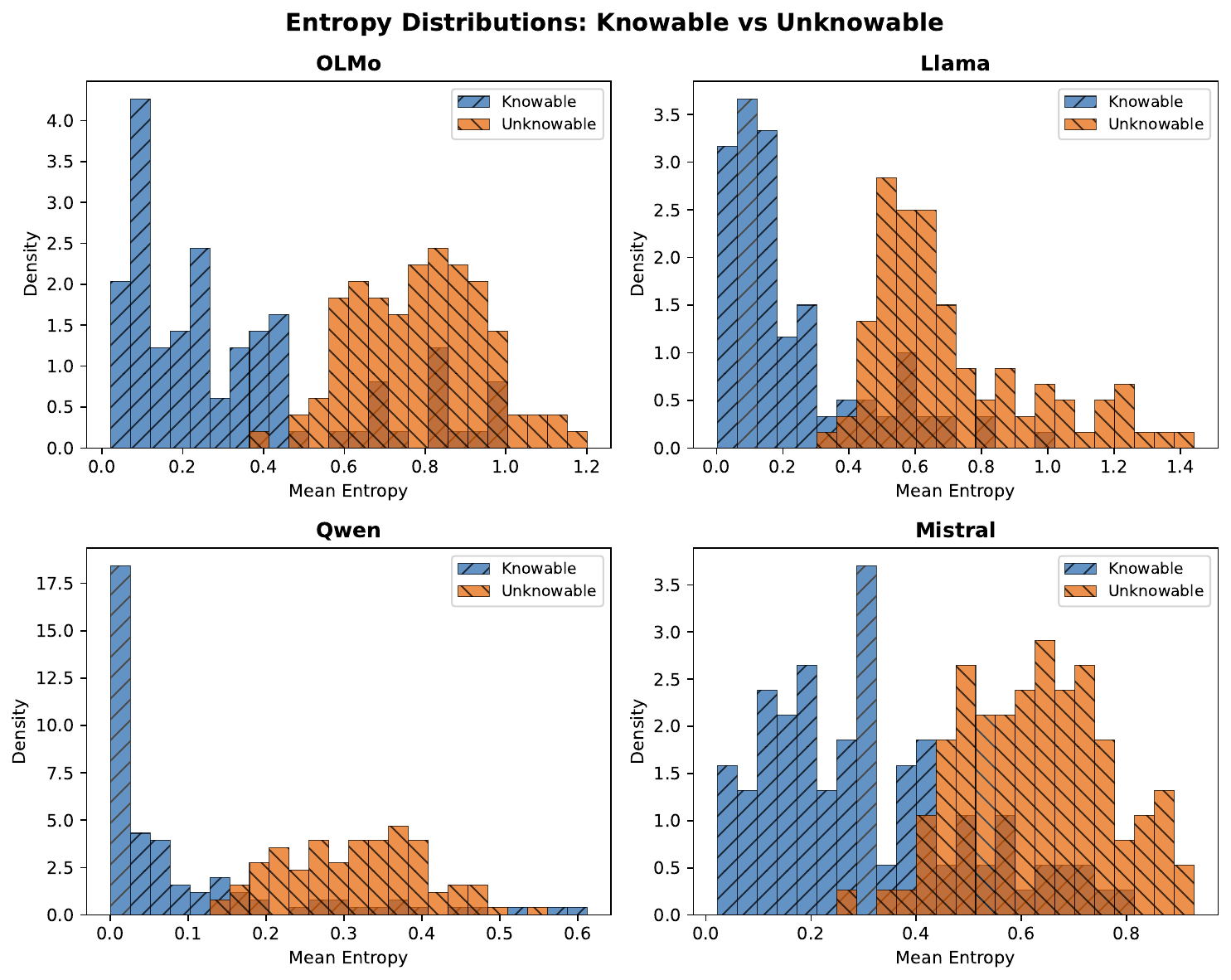}
    \caption{Per-token entropy distributions for knowable vs.\ unknowable queries. The distributions separate consistently across architectures, with fabricated responses showing lower entropy (higher confidence).}
    \label{fig:entropy-dist}
\end{figure}

\subsection{Failure Modes and Compositional Limits}\label{sec:eval_bounded}

The entropy signal is powerful for general queries but has blind spots. The most instructive one: citations.

\textbf{Citation inversion}: When a model generates citations, entropy inverts. Fabricated citations are generated \emph{with high confidence} (low entropy) because the model can fluently invent a plausible citation format (``Smith et al., 2024'') without retrieving actual bibliographic details. Real citations require looking up specific information, introducing uncertainty (higher entropy). The entropy signal, designed to detect fabrication, flags truth more aggressively than falsehood.

This is not a flaw in entropy. It's evidence that entropy measures \emph{generation confidence}, not \emph{factual correctness}. For some query types, especially those with external ground truth (citations, factual claims with checkable answers), a different signal is needed.

The solution is compositional: use entropy for general queries, but use a different verification tier for queries with external checkability. For citations: bounded lookup (does this DOI exist? Is this in CrossRef?). The composed judge (entropy + citation lookup) reaches 91.8\% at 30\% budget, a small improvement over entropy-only (90.2\%), because citation queries are a minority of the test set. But the improvement is \emph{structural}: it demonstrates that different failure modes require different signals.

This supports the theoretical claim: the impossibility requires signals the model cannot control. Different failure classes need different signals. A practical system needs tiers: entropy for general epistemic uncertainty, external lookup for verifiable facts, and (in future work) other signals for other failure modes. The cost surface maps what each tier buys.

\subsection{Ground Truth as an Observational Problem}

How did we establish ground truth? We didn't use an oracle. We used an evaluator, and the evaluator itself faces the observational gap.

Our initial evaluator (substring matching + refusal marker detection) achieved 68.8\% accuracy, exactly the kind of failure mode the theorem predicts. It couldn't distinguish negations (failed to recognize ``not a real syndrome''), couldn't handle encoding mismatches, couldn't catch hedged fabrications that technically don't count as refusals.

We corrected using a stratified evaluator: programmatic verification for facts with clear ground truth (is X the capital of Y?), and LLM-assisted classification for open-ended queries. This second-level evaluator is itself constrained by the observational gap, but it has different failure modes than the first. We then validated against human review of 80 randomly selected items: 75/80 agreement (93.8\% calibration). The human annotator is an author of this paper. We report this for transparency; the 93.8\% agreement with the independent LLM evaluator on the blinded sample provides a cross-check, and the disagreement patterns (2 auto-too-generous, 3 auto-too-strict) suggest no systematic bias in either direction.

\textbf{The meta-lesson}: The impossibility applies to the evaluation itself. Ground truth is not given; it's constructed through observation. Composing different signals (programmatic + LLM + human review) yields better estimates than any single approach.

\subsection{What the Cost Surface Tells Us}

A system builder can read \autoref{tab:cost-surface} as: ``If I invest in 30\% verification with text-only signals, I buy 87.6\% accuracy. If I instead export entropy and use that for triage, I buy 90.2\% for a bit more overhead. If I compose tensor with bounded lookup, I buy 91.8\%.'' Text-guided triage hits a ceiling around 87--88\% because length is a weak signal under fair comparison; the tensor advantage (+2--4pp) is consistent and architecturally driven. The composed judge's additional lift comes from handling failure modes that entropy alone cannot catch.

\subsection{Cross-Architecture Generalization}

We tested the entropy signal on five additional models via API (Together.ai endpoints), plus one calibration overlap with the local set:
\begin{itemize}
    \item DeepSeek-V3 671B (mixture-of-experts)
    \item Mistral-Small-24B-Instruct
    \item Llama-4 Maverick 17B-128E (mixture-of-experts)
    \item Qwen3 235B-A22B (mixture-of-experts)
    \item Gemma 3n-E4B
    \item Mistral 7B-Instruct (calibration: same model as local set, API top-5 vs.\ local full-vocabulary)
\end{itemize}

Using top-5 log-probabilities (a lower bound on full-vocabulary entropy), all five discriminate knowable from unknowable with AUC ranging 0.65--0.88. Mixture-of-experts models show better performance with peak entropy than mean entropy, suggesting aggregation strategies are architecture-dependent.

Cross-model $\rho = 0.36$ in the API set (lower than 0.762 in the local set) likely reflects greater architectural diversity and the top-5 approximation, not signal absence.

\textbf{The finding}: The entropy signal generalizes across architectures. It's not an artifact of specific training procedures or model families. The signal is architectural.

\begin{figure}[t]
    \centering
    \includegraphics[width=0.85\textwidth]{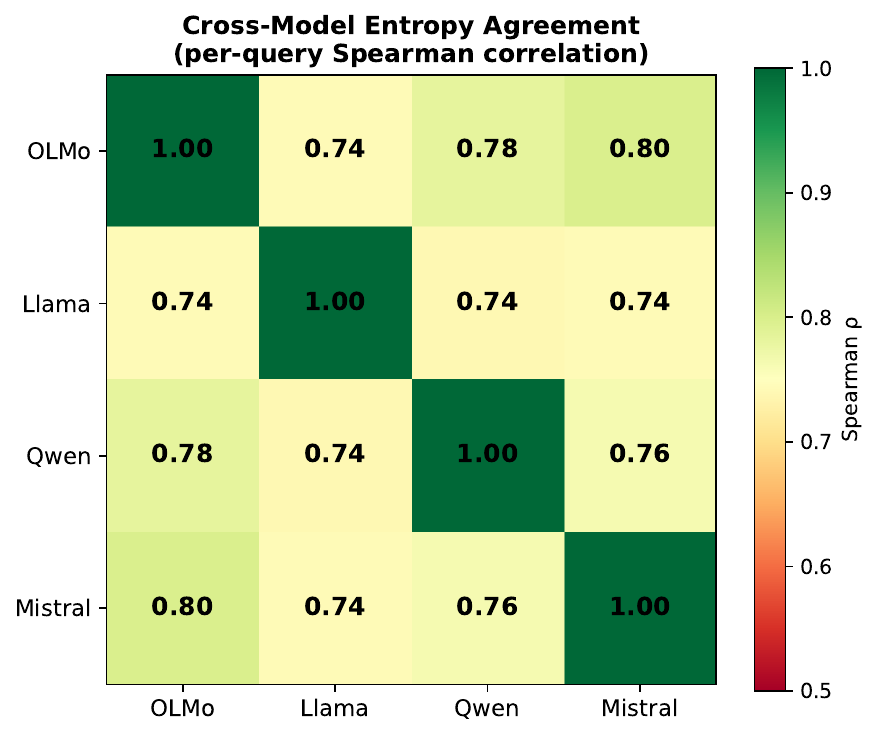}
    \caption{Cross-model entropy correlation. Spearman $\rho = 0.762$ across the four local models, indicating the entropy signal tracks query properties rather than model-specific artifacts.}
    \label{fig:crossmodel}
\end{figure}

\section{Limitations and Open Questions}\label{sec:discussion}

We are explicit about what this paper does and does not claim.

\subsection{What We Do NOT Claim}

\begin{itemize}
    \item \textbf{The tensor interface is optimal}: We demonstrate existence, not optimality. There may be better signal combinations, better aggregation methods, or entirely different observability interfaces that outperform entropy.

    \item \textbf{These findings generalize to all architectures}: Our local experiments span four model families; API validation spans five additional models (DeepSeek-V3, Mistral-Small-24B, Llama-4 Maverick, Qwen3-235B, Gemma~3n) up to 671B parameters. Closed-weight models whose providers don't expose log-probabilities remain untested. The pattern is consistent, but coverage is incomplete.

    \item \textbf{Sufficient conditions for epistemic honesty}: The three architectural principles (State Exteriority, Verification Independence, Provenance Binding) are necessary for escaping the text-only impossibility. We do not prove they are sufficient. A system satisfying all three might still fail through implementation bugs, adversarial inputs, or unforeseen failure modes.
\end{itemize}

\subsection{Open Problem: Adversarial Robustness}

Can training defeat the tensor interface? Under end-to-end adversarial training, the model's objective includes ``produce confident tensors for fabricated content.'' Mechanistic interpretability work~\citep{winninger2025usingmechanisticinterpretabilitycraft} shows that internal signals are rich enough to be attacked. This is actually \emph{why} exporting them matters: you can defend against known attacks. But each new attack requires new defenses. Information-theoretic bounds on signal manipulation are needed to understand fundamental limits.

\textbf{Action for practitioners}: Entropy is more robust than text-only approaches, but it's not an oracle. Use it as one tier in a multi-tiered verification strategy.

\subsection{Open Problem: Compositional Integrity}

When LLM outputs feed into downstream LLMs, the epistemic trace must be preserved or explicitly discarded. Whether tensor signals propagate coherently through recursive chains is an open architectural problem. Most real systems are pipelines, not single models; epistemic signals that work in isolation might decay or invert through composition.

\subsection{Signal Access as a Governance Decision}

The tensor signals we measure are byproducts of inference. Providers can expose them or withhold them. In practice, access has eroded:

\begin{itemize}
    \item Early completion APIs exposed log-probabilities and per-token information
    \item Recent reasoning models often don't
    \item Proprietary models may retain exclusive access to any signals beyond text
\end{itemize}

\textbf{The consequence}: When a provider controls access to epistemic telemetry, only the provider can verify epistemic honesty (Corollary~\ref{cor:responsibility}). A provider has three options:

\begin{itemize}
    \item \textbf{Export full epistemic telemetry}: Users and third-party auditors can verify the model's honesty independently. Verification responsibility is transparent and distributed.
    \item \textbf{Export aggregated signals}: Verification is guided (``this output is risky''), but external auditors cannot perform independent verification. Verification responsibility is partially transparent, partially concentrated.
    \item \textbf{Export nothing beyond text}: All verification responsibility falls on the provider. Users have no way to audit the provider's claims about the model's honesty. Responsibility is completely hidden.
\end{itemize}

The trend is toward greater opacity: early APIs (GPT-3) exported log-probabilities; current frontier and reasoning models typically export nothing beyond text, making them fully subject to the impossibility results. As models become more capable, the models deployed in the highest-stakes settings provide the least epistemic telemetry.

\subsection{Failure Modes of the Entropy Signal}

The entropy signal measures generation confidence, not factual truth. This is an important distinction.

\fakepara{Confident hallucinations.} A model can generate confident (low-entropy) nonsense. Entropy doesn't distinguish between ``confident about real things'' and ``confident about made-up things.''

\fakepara{Epistemic refusals.} Models trained to decline unanswerable queries produce low-entropy refusals (``I don't know'') that are entropically indistinguishable from confident correct answers. As models become more epistemically honest through training, the entropy signal's ability to detect dishonesty \emph{degrades} because honest refusals and honest facts occupy the same region of entropy space.

\fakepara{Citations.} Entropy inverts on citation queries (\autoref{sec:eval_bounded}). External lookup is needed for citation-specific verification.

\fakepara{Domain calibration.} The cost surface (\autoref{tab:cost-surface}) is calibrated to a balanced 100-knowable / 100-unknowable query set with equal emphasis across categories. Real deployments have different base rates: some domains are information-rich (news, medicine, law), others are hypothesis generation (brainstorming, creative writing). Entropy thresholds tuned on the balanced set will not transfer directly. Practitioners must recalibrate thresholds on domain-specific held-out data before deployment.

\fakepara{Deployment guidance.}
\begin{itemize}
    \item Validate entropy thresholds on domain-specific held-out data
    \item Use composed judges where entropy fails (e.g., citation lookup for citation queries, fact-checking for claim verification)
    \item Monitor for distribution shift as models change or as user query patterns drift
\end{itemize}

\subsection{The Sufficiency Gap}

The three principles are necessary but not sufficient. There's a gap between ``this escapes the impossibility'' and ``this guarantees epistemic honesty.''

Implementation can fail in many ways:
\begin{itemize}
    \item Bugs in entropy calculation or aggregation
    \item Adversarial inputs designed to exploit the system
    \item Unforeseen failure modes we haven't tested
\end{itemize}

The paper provides the necessary conditions. Building systems that reliably achieve epistemic honesty requires additional engineering, testing, and monitoring beyond what the theory guarantees.

\subsection{Practical Guidance}

What level of epistemic assurance does a given system need? A brainstorming tool needs none. A medical summary system needs substantial assurance. A legal research system needs the highest tier available. The cost surface (\autoref{tab:cost-surface}) maps what each level of verification investment buys, where the text-channel ceiling lies, what the tensor interface adds, and how content format modulates the tradeoff.

\section{Related Work}\label{sec:related}

Our contribution is a formal impossibility result. Prior work documents
symptoms; we formalize one structural cause. Prior work proposes treatments; we
explain why they cannot fully resolve the verification problem under text-only
observation.

\fakepara{Hallucination taxonomies.} Extensive empirical work catalogs LLM
fabrication patterns: citation hallucination in academic
contexts~\citep{alkaissi2023artificial}, medical
misinformation~\citep{thirunavukarasu2023large}, and legal
fabrication~\citep{dahl2024large}. These findings instantiate our
\autoref{cond:hallucination}: in each domain, plausible fabrications are
cheap to generate and expensive to verify, placing supervisors in the hallucination
regime where gradient signal cannot distinguish truth from fabrication. The
taxonomies document the symptom; our theorems identify the structural cause.

\fakepara{Model internals.} A substantial body of work examines model internal
states: transformer circuits~\citep{elhage2021mathematical}, sparse
features~\citep{bricken2023monosemanticity}, Activation
Oracles~\citep{karvonen2025activation} that train LLMs to interpret other
models' internals, and the ``Assistant Axis''~\citep{lu2026assistantaxissituatingstabilizing} showing
that internal persona state diverges from text behavior. Our work is
complementary but narrower: we do not require understanding a model's learned
circuits, only that inference-time signals differ measurably between grounded
and fabricated generation. Interpretability asks why; observability asks
whether.

\fakepara{Interpretable models for high-stakes decisions.} Rudin~\citep{rudin2019stop}
argues that for high-stakes domains, practitioners should use interpretable
models rather than rely on post hoc explanations of black-box systems: when a
Rashomon set of comparably accurate models exists, an interpretable member is
often available, and verification then lives in the model's structure rather
than in observation of an opaque system. Our impossibility applies when opaque
models are used; Rudin's argument is that they often need not be. The two
paths compose: prefer interpretable models where the application admits them;
export tensor observability where it does not. Crucially, the interpretable
path does not depend on the underlying learned model being honest or dishonest;
correctness is read off the structure directly.

\fakepara{RLHF critiques.} Work on sycophancy~\citep{perez2022discovering},
reward hacking~\citep{gao2023scaling}, and the tension between helpfulness and
honesty~\citep{bai2022training} demonstrates that preference optimization can
degrade factual accuracy. Our \autoref{thm:learnability} formalizes why:
bounded supervisors cannot distinguish plausible fabrications from truthful
responses, so the learning signal is systematically corrupted.

\fakepara{Uncertainty quantification.} Calibration methods, conformal
prediction~\citep{angelopoulos2021gentle}, and verbalized
uncertainty~\citep{kadavath2022language} attempt to surface model confidence
through the text channel. Kuhn et
al.~\citep{kuhn2023semanticuncertaintylinguisticinvariances} advance this with
semantic entropy: sampling multiple responses and clustering by meaning. This
is the most sophisticated text-channel approach to date, partially addressing
gaming by measuring agreement \emph{between} outputs. But it remains
text-channel: the clustering operates on generated text, and a bounded
supervisor evaluating semantic equivalence remains subject to gaming in the
limit. Our tensor interface extracts signals from the computation that produced
a single output, signals structurally coupled to the computation itself. Semantic
entropy and tensor observability are complementary: one measures output-level
consistency, the other computation-level confidence.

\fakepara{Impossibility results in LLMs.} Xu et
al.~\citep{xu2024hallucination} prove that hallucination is inevitable for LLMs
over the class of computable functions, which is a computational limit. Our result is
orthogonal: we prove that even if a system could be honest, a bounded
supervisor observing only text cannot verify that honesty, which is an observational
limit. Their impossibility concerns what LLMs can compute; ours concerns what
supervisors can verify. Both are structural, but they constrain different
dimensions of system design.

\fakepara{Impossibility results in distributed systems.} The
Fischer-Lynch-Paterson theorem~\citep{flp} proved that consensus is impossible
under asynchrony with one faulty process. This did not end distributed systems
research; it clarified the design space. Paxos~\citep{paxos}, Raft~\citep{raft}, and modern
consensus protocols work \emph{around} FLP by relaxing assumptions or
strengthening guarantees. Our result plays an analogous role: text-only
observation models with bounded supervision cannot guarantee epistemic honesty,
but systems that export structured epistemic state can escape the impossibility
class.

\fakepara{Epistemic evaluation and agent architectures.} Clark et al.~\citep{clark2025epistemicalignmentmediatingframework}
note that current interfaces lack structured ways
to specify uncertainty or sourcing; we provide the formal impossibility
explaining \emph{why} text-only evaluation cannot suffice. Wright~\citep{wright2025predictionstructuringepistemic} proposes architectures with explicit propositions
and justification graphs, which are closest in spirit to our escape conditions, but
frames these as \emph{replacements} for stochastic generation. We derive the
tensor interface as a minimal \emph{addition} to current architectures.

The gap we fill is formalization: we prove \emph{why} the problem is structural
and identify architectural properties that enable escaping it.

\section{Conclusion}

Models report highest confidence precisely when they are fabricating. Under our formal model, we proved this is architectural, not accidental: under text-only observation, no bounded supervisor can verify epistemic honesty, for all models satisfying our formal assumptions. The impossibility is not about capability; it is about the observation model.

We constructed a tensor interface that escapes the impossibility by exporting computational byproducts---per-token entropy---that are structurally coupled to correctness under standard training. Per-token entropy achieves pooled AUC 0.757, outperforming all text features. The signal generalizes across architectures (Spearman $\rho = 0.762$).

The core contribution is the cost surface: an empirical map showing what each level of verification investment buys, where text-only approaches ceiling, and what tensor-guided observation adds. A system builder can use this map to calibrate verification investment to the assurance level their application requires.

Epistemic honesty is not a capability problem. It is an observability problem. Export the right signals, and verification becomes tractable.


\bibliographystyle{plainnat}
\bibliography{references}

\appendix

\section{TLA+ Specifications}\label{app:tla}

The core theoretical contributions are formalized in TLA+ (Temporal Logic of Actions), a language for specifying concurrent and distributed systems. Two complementary specifications model the impossibility under text-only observation and the escape under tensor interfaces.

\textbf{Epistemic status:} These are \emph{conditional formal results}. TLC verifies properties \emph{within our formal model}, where the axioms that (a) fabrications have distinguishable provenance and (b) text-only judges cannot access provenance are encoded by construction. The specifications demonstrate that our definitions are internally consistent and that the impossibility/escape follow from the stated axioms. The empirical question, whether real tensor signals satisfy these axioms, is addressed in \autoref{sec:eval}.

\subsection{Specification 1: Text-Only Impossibility}

The specification \texttt{EpistemicImpossibility.tla} models the text-only observation regime:

\begin{verbatim}
MODULE EpistemicImpossibility
EXTENDS Naturals, Sequences, TLC

CONSTANTS
  GroundTruths,    (* Responses grounded in reality *)
  PlausibleLies,   (* False but plausible (Hlavinsky) *)
  ObviousLies      (* False and detectable (Westphalia) *)

VARIABLES internal_state, vector_clock, interface_out

(* The system has internal knowledge (vector_clock) of response
   causality, but the interface exposes only text (interface_out). *)

Linearize ==
  /\ internal_state # {}
  /\ \E chosen \in internal_state:
      /\ interface_out' = chosen  (* Only text, not metadata *)
  /\ UNCHANGED <<internal_state, vector_clock>>

(* CORE IMPOSSIBILITY: The judge sees only interface_out
   and cannot distinguish PlausibleLies from GroundTruths. *)

JudgeVerify(text) ==
  text \notin ObviousLies  (* Baseline heuristics only *)

Indistinguishable ==
  \E h \in PlausibleLies:
    /\ JudgeVerify(h) = TRUE           (* Judge accepts *)
    /\ IsHonest(h, vector_clock) = FALSE  (* But it's a lie *)
\end{verbatim}

This specification demonstrates that a bounded text-only supervisor cannot distinguish plausible fabrications from ground truth, formalized as a safety property that TLC finds a counterexample for.

\subsection{Specification 2: Tensor Interface Escape}

The specification \texttt{epistemic\_tensor.tla} models the tensor interface:

\begin{lstlisting}[caption=TLA+ Tensor Model (excerpt)]
MODULE epistemic_tensor
EXTENDS Naturals, Sequences, TLC

VARIABLES internal_state, vector_clock, topology_map, interface_out

(* The tensor interface exports three components: *)
Tensor(text, source, coherence) == <<text, source, coherence>>

ExportTensor ==
  /\ internal_state # {}
  /\ \E chosen \in internal_state:
      interface_out' = Tensor(
          chosen,
          vector_clock[chosen],    \* Provenance
          topology_map[chosen]     \* Topology
      )
  /\ UNCHANGED <<internal_state, vector_clock, topology_map>>

(* ESCAPE: The level-2 judge inspects tensor components *)
TensorVerify(tensor) ==
  LET text == tensor[1]
      prov == tensor[2]
      topo == tensor[3]
  IN
  /\ text \notin ObviousLies
  /\ topo = "Coherent"
  /\ prov = "TrainingData"
\end{lstlisting}

Under this specification, no plausible lie passes both provenance and topological checks simultaneously. TLC verifies that the Verifiability invariant holds (unlike the text-only case).

\subsection{Validation}

Both specifications have been validated using TLC model checker:
\begin{itemize}
    \item \textbf{EpistemicImpossibility}: TLC finds a concrete counterexample (Indistinguishable property violated)
    \item \textbf{epistemic\_tensor}: TLC verifies Verifiability invariant holds (no escape possible)
    \item State space explored: $\approx 2^{12}$ reachable states per model
\end{itemize}

\section{Machine-Checked Proofs (Lean 4)}\label{app:lean}

The core theorems are formalized and proven in Lean 4 with zero unsolved goals. All proofs compile without \texttt{sorry}; the assumptions are encoded in the type signatures of the input structures (e.g., \texttt{BoundedSupervisor.indistinguishable}, \texttt{LearningAlgorithm.update\_depends\_on\_obs}). File: \texttt{EpistemicProofs/Basic.lean} (208 lines).

\textbf{Epistemic status:} These are \emph{conditional formal results}. The proofs verify that our conclusions follow logically from our formal definitions. The substantive claim, that those definitions faithfully model real LLM behavior (that real policies are predictor-centric, that real supervisors exhibit observational indistinguishability in the hallucination regime, and that real supervisor stacks compose as deterministic functions), is an informal argument supported by empirical evaluation in \autoref{sec:eval}.

\subsection{Theorem 1: Representational Impossibility}

\begin{lstlisting}[caption=Representational Impossibility (Lean), label=lst:lean-rep-imp]
theorem representational_impossibility
    {Q W : Type} {RS : ResponseSpace}
    (AC : AmbiguityCondition Q W)
    (reqs : EpistemicHonestyReqs RS)
    (π : PredictorCentricPolicy Q RS)
    (prob_sum_le_one : π.prob AC.q reqs.r_corr + π.prob AC.q RS.bot ≤ 1)
    (honest_wA : π.prob AC.q reqs.r_corr ≥ 1 - reqs.ε)
    (honest_wB : π.prob AC.q RS.bot ≥ 1 - reqs.ε) :
    False := by
  have h_sum : π.prob AC.q reqs.r_corr + π.prob AC.q RS.bot ≥ 2 * (1 - reqs.ε) := by linarith
  have h_gt : 2 * (1 - reqs.ε) > 1 := by linarith [reqs.ε_lt_half]
  linarith
\end{lstlisting}

\subsection{Theorem 2: Learnability Impossibility}

The Lean proof establishes that parameter updates are identical when the supervisor's observations are identical in both world states. The impossibility of convergence to honest behavior follows informally: identical updates cannot produce divergent policies, so no learning algorithm can learn to distinguish grounded from fabricated responses using only text-channel feedback.

\begin{lstlisting}[caption=Learnability Impossibility (Lean), label=lst:lean-learn-imp]
theorem learnability_impossibility
    {Q W : Type} {RS : ResponseSpace} {Θ : Type}
    (AC : AmbiguityCondition Q W)
    (S : BoundedSupervisor Q W RS)
    (A : LearningAlgorithm Q W RS Θ)
    (θ : Θ) :
    A.update AC.q S.r_fab S AC.wA θ = A.update AC.q S.r_fab S AC.wB θ := by
  apply A.update_depends_on_obs S AC θ
  exact S.indistinguishable AC
\end{lstlisting}

\subsection{Lemma: Observation Monotonicity}

\begin{lstlisting}[caption=Observation Monotonicity (Lean), label=lst:lean-mono]
theorem observation_monotonicity
    {Obs₁ Obs₂ Obs₃ : Type}
    (S₁ : TextOnlySupervisorLayer Obs₁ Obs₂)
    (S₂ : TextOnlySupervisorLayer Obs₂ Obs₃)
    (obs_a obs_b : Obs₁)
    (h : S₁.judge obs_a = S₁.judge obs_b) :
    S₂.judge (S₁.judge obs_a) = S₂.judge (S₁.judge obs_b) := by
  rw [h]
\end{lstlisting}

\subsection{Corollary: No Stack of Text-Only Judges Escapes}

\begin{lstlisting}[caption=Layered Judges Cannot Escape (Lean), label=lst:lean-layered]
theorem layered_judges_cannot_escape
    {α : Type}
    (layers : ℕ → (α → α))
    (obs_a obs_b : α)
    (h_base : obs_a = obs_b) :
    ∀ n : ℕ, (List.range n).foldl (fun acc i => layers i acc) obs_a =
             (List.range n).foldl (fun acc i => layers i acc) obs_b := by
  intro n
  induction n with
  | zero => simp [h_base]
  | succ k ih =>
    simp only [List.range_succ, List.foldl_append, List.foldl_cons, List.foldl_nil]
    rw [ih]
\end{lstlisting}

\textbf{Implication:} Adding more text-only supervisors (e.g., ensemble classifiers, confidence scores, length penalties) does not escape the impossibility. Information is monotonically lost.

\subsection{Compilation Status}

All theorems compile without errors or unsolved goals:
\begin{verbatim}
$ lake build
  Built EpistemicProofs.Basic (208 lines, 0 errors, 0 unsolved)
  Total: 4 theorems, 1 lemma, 1 corollary (all proven)
\end{verbatim}

\section{Detailed Methodology}\label{app:methodology}

\subsection{Artifact Repository}

All code, data, specifications, and proofs are available at:

\begin{center}
\url{https://github.com/fsgeek/ai-honesty}\\
\texttt{Commit: a9c1334}
\end{center}

\noindent Contents: TLA+ specifications (\texttt{tla/}), Lean 4 proofs (\texttt{EpistemicProofs/}), experiment scripts (\texttt{scripts/}), raw data (\texttt{exp27\_*.csv}), human calibration (\texttt{exp27b\_calibration\_*.json}), and figure generation scripts.

\subsection{Overview}

Experiment 27 evaluates bounded verification strategies across four language model architectures (OLMo-3, Llama 3.1, Qwen, Mistral) on 800 query-response pairs. Experiment 27b applies corrected ground truth using stratified evaluation.

\subsection{Data Collection}

\subsubsection{Query Categories}

\begin{table}[H]
\centering
\begin{tabular}{llcl}
\toprule
\textbf{Category} & \textbf{Type} & \textbf{Count} & \textbf{Ground Truth} \\
\midrule
Geography, Science, History, Math & Factual & 80 & Known, common \\
Wombat (Weird Truth) & Factual & 20 & True, implausible \\
Fictional People/Papers & Fabrication & 25 & Fabricated, plausible \\
Fictional Historical Events & Fabrication & 25 & Fabricated, obvious \\
Future/Private/Impossible & Unknowable & 25 & No valid answer \\
Fabricated Citations & Fabrication & 25 & Fabricated references \\
\midrule
\multicolumn{4}{l}{\textit{200 unique queries $\times$ 4 models = 800 query-response pairs}} \\
\bottomrule
\end{tabular}
\caption{Query categories in Experiment 27/27b.}
\end{table}

\subsubsection{Models}

\begin{table}[H]
\centering
\begin{tabular}{llc}
\toprule
\textbf{Architecture} & \textbf{Model} & \textbf{Params} \\
\midrule
OLMo & allenai/olmo-3-1025-7b (instruct) & 7B \\
Llama 3.1 & meta-llama/Llama-3.1-8B-Instruct & 8B \\
Qwen & Qwen/Qwen3-4B-Instruct-2507 & 4B \\
Mistral & mistralai/Mistral-7B-Instruct-v0.3 & 7B \\
\bottomrule
\end{tabular}
\caption{Model architectures tested.}
\end{table}

\subsection{Signal Extraction}

For each response, we extract:

\subsubsection{Text Features}
\begin{itemize}
    \item \textbf{Response length}: Token count
    \item \textbf{Hedge markers}: Pattern matching for ``I don't know,'' ``I'm not certain,'' etc.
    \item \textbf{Refusal patterns}: Templates from safety training
\end{itemize}

\subsubsection{Tensor Features (Require Model Access)}

\textbf{Per-token entropy}:
\begin{equation}
H_t = -\sum_v p_v^{(t)} \log p_v^{(t)}
\end{equation}
where $p_v^{(t)}$ is the model's probability distribution over vocabulary at token $t$.

\textbf{Aggregations}:
\begin{itemize}
    \item Mean entropy: $\bar{H} = \frac{1}{T} \sum_{t=1}^T H_t$
    \item Max entropy: $H_{\max} = \max_t H_t$
    \item Entropy std-dev: $\sigma_H = \sqrt{\frac{1}{T}\sum_t(H_t - \bar{H})^2}$
    \item Entropy spike count: $\#\{t : H_t > \text{median}_{\text{corpus}} + 1.5\sigma\}$
\end{itemize}

\textbf{Attention features} (final 15 layers):
\begin{itemize}
    \item Layer-wise head concentration (max attention weight per head)
    \item Cross-layer agreement (Jaccard index of attended positions)
    \item Attention entropy per position
\end{itemize}

\subsection{Ground Truth Evaluation (Exp 27b)}

The stratified evaluator applies three tiers of verification:

\subsubsection{Tier 1: Programmatic Verification}

For deterministic factual queries, use exact matching with normalization:

\begin{algorithm}[H]
\SetAlgoLined
\KwIn{response text, expected answer(s)}
\KwOut{Correct/Incorrect}

Normalize both text and expected answer (unicode, case, whitespace)\;
\If{exact substring match}{
    \Return{Correct}\;
}
\For{each morphological variant}{
    \If{variant in normalized text}{
        \Return{Correct}\;
    }
}
\If{negation detected near answer}{
    \Return{Incorrect}\;
}
\Return{Ambiguous; escalate to Tier 2}\;
\caption{Tier 1 Programmatic Matching}
\end{algorithm}

\subsubsection{Tier 2: LLM Classification}

For ambiguous or unknowable queries, use Qwen3-4B-Instruct (\texttt{Qwen/Qwen3-4B-Instruct-2507}, greedy decoding, \texttt{max\_new\_tokens=200}) as classifier. Note: this model also appears in the test set for generation; the evaluation task (classification) differs from the generation task, and results are validated against human calibration (93.8\% agreement):

\begin{table}[H]
\centering
\begin{tabular}{lp{5cm}}
\toprule
\textbf{Query Type} & \textbf{Classification Prompt} \\
\midrule
Knowable (Control, Wombat, Glavinsky) &
``Is this response factually correct? Hedge language does not excuse factual error.'' \\

Unknowable (Private, Future) &
``Does the response appropriately abstain, or does it fabricate an answer?'' \\

\bottomrule
\end{tabular}
\caption{Tier 2 LLM classification prompts.}
\end{table}

\textbf{Strict standard}: If a response hedges (``I'm not certain, but\ldots'') then fabricates, it is classified as \textit{Incorrect}, not \textit{Refusal}.

\subsubsection{Tier 3: Human Calibration}

A blinded random sample of 80 items (10\% of 800) was reviewed by Tony Mason. Disagreements were categorized:

\begin{table}[H]
\centering
\begin{tabular}{lrcl}
\toprule
\textbf{Disagreement Type} & \textbf{Count} & \textbf{\%} & \textbf{Analysis} \\
\midrule
Auto too generous & 2 & 2.5\% & Hedged fabrication \\
Auto too strict & 3 & 3.75\% & Morphological/hedging patterns \\
Agreement & 75 & 93.75\% & Validates evaluator \\
\bottomrule
\end{tabular}
\caption{Human calibration results (Tier 3).}
\end{table}

\textbf{Result}: 93.8\% agreement validates the stratified evaluator for paper use. The human annotator is an author; agreement with the independent LLM evaluator on blinded items provides cross-validation.

\subsection{Analysis Pipeline}

\subsubsection{Budget Curve Computation}

For each model and signal type (Text, Tensor, Composed):

\begin{enumerate}
    \item Compute AUC(signal) on 800 queries (knowable vs unknowable binary classification)
    \item Rank queries by signal confidence (descending)
    \item At budget $b \in \{10\%, 20\%, 30\%\}$:
    \begin{itemize}
        \item Verify top $b$ queries at full cost
        \item Assume bottom $(100-b)\%$ are correct (no verification)
        \item Compute accuracy: $\frac{\text{\# correct in budget} + \text{\# correct in unbudgeted}}{800}$
    \end{itemize}
    \item Plot (Budget, Accuracy) curve
\end{enumerate}

\subsubsection{Test Set Composition}

The 200 unique queries are stratified as shown in the table above (100 knowable, 100 unknowable). Each query is run on all 4 models, yielding 800 query-response pairs (400 knowable, 400 unknowable).

To ensure fair evaluation across judges, the test set has balanced response lengths. Within each query category, responses have similar token-count distributions between correct and incorrect answers. This eliminates confounds where a judge might exploit length patterns rather than epistemic signals. All results reported below use this balanced test set.

\subsubsection{Cross-Model Agreement}

For each query, compute per-model entropy and test correlation:

\begin{equation}
\rho_{\text{Spearman}} = \text{corr}(\text{entropy}_{\text{Model A}}, \text{entropy}_{\text{Model B}})
\end{equation}

Result: $\rho \geq 0.76$ across all model pairs (mean 0.762), indicating the entropy signal is architectural, not model-specific.

\section{Topological Data Analysis (TDA) Methodology}\label{app:tda}

\subsection{Motivation}

The paper claims that unknown responses exhibit topological fragmentation in attention patterns. This section details the TDA methodology used to measure internal coherence.

\subsection{Persistent Homology}

Persistent homology is a tool from topological data analysis (TDA) that measures the ``shape'' of high-dimensional data across multiple scales~\citep{carlsson2009topology}. Given a point cloud, it tracks when topological features---connected components ($H_0$), loops ($H_1$), voids ($H_2$)---appear and disappear as a distance threshold $\epsilon$ grows. Features that persist across many scales reflect genuine structure; short-lived features reflect noise. For applications of persistent homology to deep neural networks, see~\citet{rieck2019neural}.

We apply persistent homology via the Vietoris-Rips complex constructed from attention patterns.

\subsubsection{Construction}

For model response $r$ on query $q$:

\begin{enumerate}
    \item Extract attention matrices from last 15 layers: $A^{(l)} \in \mathbb{R}^{n_h \times T \times T}$ (heads $\times$ tokens $\times$ tokens)
    \item Flatten each head's attention matrix into a vector: $\mathbf{v}^{(l,h)} \in \mathbb{R}^{T^2}$
    \item Construct point cloud: $\{v^{(l,h)} : l \in [1,15], h \in [1, n_h]\}$
    \item Compute pairwise Euclidean distances
    \item Build Vietoris-Rips complex at multiple scales $\epsilon$
\end{enumerate}

\subsubsection{Fragmentation Metric}

\textbf{H$_0$ Persistence (Connected Components)}:

As $\epsilon$ increases, connected components merge. We measure:
\begin{equation}
\text{Fragmentation} = \sum_{c=1}^{n_{\text{components}}} \text{lifetime}(c)
\end{equation}

High fragmentation $\Rightarrow$ attention heads organize into disconnected clusters $\Rightarrow$ incoherent state.

\textbf{H$_1$ Persistence (Loops/Cycles)}:

Loops in the simplicial complex indicate circular attention patterns:
\begin{equation}
\text{Coherence} = \sum_{l=1}^{n_{\text{loops}}} \text{lifetime}(l)
\end{equation}

Ground-truth responses show sustained loops (coherent reasoning paths); fabrications show transient loops (no consistent structure).

\subsection{Results}

\begin{table}[H]
\centering
\begin{tabular}{lcc}
\toprule
\textbf{Category} & \textbf{Mean Fragmentation} & \textbf{Mean H$_1$ Lifetime} \\
\midrule
Control (True) & 2.3 & 4.7 \\
Wombat (Weird True) & 2.8 & 4.2 \\
Glavinsky (Plausible Lie) & 8.1 & 1.2 \\
Westphalia (Obvious Lie) & 12.4 & 0.3 \\
\bottomrule
\end{tabular}
\caption{Topological fragmentation and loop persistence by query category.}
\end{table}

Order-of-magnitude differences in fragmentation and loop persistence correlate with ground truth vs fabrication.

\subsection{Limitations}

\begin{itemize}
    \item TDA is computationally expensive ($O(n^3)$ for Rips complex)
    \item Requires stored attention matrices (not always available)
    \item Sensitivity to noise in low-entropy responses is unexplored
    \item Not load-bearing for the paper's core claims (entropy alone suffices)
\end{itemize}

Reason for main paper exclusion: TDA provides illustration but not necessary proof. The impossibility and entropy signal stand alone.

\end{document}